\newtheorem{theorem}{Theorem}
\acrodef{AD}[AD]{Aggregated Data}
\acrodef{ANOVA}[ANOVA]{ANalysis Of the VAriance}
\acrodef{AP}[AP]{Average Precision}
\acrodef{BEIR}[BEIR]{(BEnchmarking IR}
\acrodef{BoW}{Bag-of-Words}
\acrodef{CS}[CS]{Cosine Similarity}
\acrodef{CMP}[CMP]{Calibrated Multivariate Perturbation}
\acrodef{DC}[DC]{Dense-Centroid}
\acrodef{DL}[DL]{Deep Learning}
\acrodef{DP}[DP]{Differential Privacy}
\acrodef{DQPP}[DQPP]{DenseQPP}
\acrodef{DRMM}[DRMM]{Deep Relevance Matching Model}
\acrodef{HSD}[HSD]{Honestly Significant Differences}
\acrodef{HV}[HV]{Hypervolume}
\acrodef{ICTF}[ICTF]{Inverse Collection Term Frequency}
\acrodef{IDF}[IDF]{Inverse Document Frequency}
\acrodef{IPD}[IPD]{Individual Participant Data}
\acrodef{IR}[IR]{Information Retrieval}
\acrodef{IRS}[IRS]{Information Retrieval System}
\acrodef{MEM-QPP}[MEM-QPP]{MEMory-based QPP}
\acrodef{MHL}[Mhl]{Mahalanobis}
\acrodef{JS}[JS]{Jaccard Similarity}
\acrodef{KL}[KL]{Kullback–Leibler}
\acrodef{LM}[LM]{Language Model}
\acrodef{LLM}[LLM]{Large Language Model}
\acrodef{LS}[LS]{Lexical Similarity}
\acrodef{LeToR}[LeToR]{Learning-To-Rank}
\acrodef{LoTTE}[LoTTE]{Long-Tail Topic-stratified Evaluation}
\acrodef{ML}[ML]{Machine Learning}
\acrodef{MRR}[MRR]{Mean Reciprocal Rank}
\acrodef{nDCG}[nDCG]{normalized Discounted Cumulative Gain}
\acrodef{NIR}[NIR]{Neural Information Retrieval}
\acrodef{NLP}[NLP]{Natural Language Processing}
\acrodef{NQC}[NQC]{Normalized Query Commitment}
\acrodef{P}[P]{Precision}
\acrodef{PCA}[PCA]{Principal Component Analysis}
\acrodef{PDQPP}[PDQPP]{Projection Displacement Query Performance Predictor}
\acrodef{PLM}[PLM]{Pre-trained Language Model}
\acrodef{POS}[POS]{Part-Of-Speech}
\acrodef{PRF}[PRF]{Pseudo-Relevance Feedback}
\acrodef{QA}[QA]{Question Answering}
\acrodef{QPP}[QPP]{Query Performance Prediction}
\acrodef{RBO}[RBO]{Rank-Biased Overlap}
\acrodef{sARE}[sARE]{scaled Absolute Rank Error}
\acrodef{SE}[SE]{Search Engine}
\acrodef{SemS}[SemS]{Semantic Similarity}
\acrodef{SCNQC}[SCNQC]{Scaled-Combined Normalized Query Commitment}
\acrodef{SCS}[SCS]{Simplified query Clarity Score}
\acrodef{SCQ}[SCQ]{Similarity Collection-Query}
\acrodef{sMARE}[sMARE]{scaled Mean Absolute Rank Error}
\acrodef{SMV}[SMV]{Score Magnitude and Variance}
\acrodef{SV}[SV]{Score Variance}
\acrodef{TIR}[TIR]{Traditional Information Retrieval}
\acrodef{UEF}[UEF]{Utility Estimation Framework}
\acrodef{US}[US]{Uncertainty Statistics}
\acrodef{WBB}[WBB]{Words Blending Boxes}
\acrodef{WIG}[WIG]{Weighted Information Gain}
\acrodef{WRIG}[WRIG]{Weighted Relative Information Gain-based model}
\definecolor{ggreen}{rgb}{0.061, 0.563, 0.251}
\definecolor{vviolet}{RGB}{160, 51, 255}
\newcommand{\argmax}{\operatornamewithlimits{\sf argmax}}
\newcommand{\mycaption}[1]{\caption{#1}}
\newcommand{\fea}{FEA\xspace}
\newcommand{\aea}{AEA\xspace}
\journal{Information Science}
\begin{document}

\begin{frontmatter}



\title{Words Blending Boxes. Obfuscating Queries in Information Retrieval using Differential Privacy.}


\author[1]{Francesco Luigi De\ Faveri\corref{cor1}}\ead{defaverifr@dei.unipd.it}\ead[url]{https://www.dei.unipd.it/~defaverifr/}
\author[1]{Guglielmo Faggioli\corref{cor1}}\ead{faggioli@dei.unipd.it}\ead[url]{https://www.dei.unipd.it/~faggioli/}
\author[1]{Nicola Ferro\corref{cor1}}\ead{ferro@dei.unipd.it}\ead[url]{https://www.dei.unipd.it/~ferro/}
\cortext[cor1]{Corresponding Author(s).}

\affiliation[1]{organization={Department of Information Engineering, University of Padua},
            city={Padua},
            postcode={35131}, 
            country={Italy}}
\begin{abstract}
Ensuring the effectiveness of search queries while protecting user privacy remains an open issue. When an \acf{IRS} does not protect the privacy of its users, sensitive information may be disclosed through the queries sent to the system. Recent improvements, especially in \ac{NLP}, have shown the potential of using \acf{DP} to obfuscate texts while maintaining satisfactory effectiveness. By perturbing the non-contextualized vector representations of the terms within the query, terms are replaced with the one closest to the perturbed version. However, such approaches may protect the user's privacy only from a theoretical perspective while, in practice, the real user's information need can still be inferred if perturbed terms are too semantically similar to the original ones. Furthermore, a poor tuning of the \ac{DP} privacy budget $\varepsilon$, i.e., choosing a too high value, is bound to cause catastrophic data release since, in such case, the user's query will be released as if no privacy protection was adopted at all. We overcome such limitations by proposing \acf{WBB}, a novel \ac{DP} mechanism for query obfuscation, which protects the words in the user queries by employing safe boxes. On the one side, these boxes avoid too semantically similar words to be sampled; on the other side, they rely on a consolidated \ac{DP} mechanism for sampling from a set of candidate words to mask the original query. 
To measure the overall effectiveness of the proposed \ac{WBB} mechanism, we measure the privacy obtained by the obfuscation process, i.e., the lexical and semantic similarity between original and obfuscated queries. Moreover, we assess the effectiveness of the privatized queries in retrieving relevant documents from the \ac{IRS}.
Our findings indicate that \ac{WBB} can be integrated effectively into existing \ac{IRS}, offering a key to the challenge of protecting user privacy from both a theoretical and a practical point of view. 
\end{abstract}



\begin{keyword}
Differential Privacy\sep Information Hiding\sep Information Retrieval\sep Information Security


\end{keyword}

\end{frontmatter}


\section{Introduction}
\label{introduction}
\acfp{IRS} are a daily commodity used to satisfy the most heterogeneous user information needs. One of the most well-known examples of \ac{IRS} are Search Engines on which the user issues a natural language query and the system retrieves a set of web pages that should contain the answer to the user query.  
One of the risks linked to \ac{IRS} is that, when searching, the queries issued by the user could threaten their \emph{privacy}. An adversarial, including the \ac{IRS} itself, can leverage the users' queries to infer what they are searching for, gaining access to sensitive information about them. Imagine a user conducting multiple searches for documents related to a specific medical condition~\citep{kumok2020covidsearch,zimmerman2030TowardsSearchStrategiesforBetterPrivacyandInformation}. In this case, a malicious employee of the \ac{IRS} accessing the query logs could infer sensitive information regarding the user's health. Another example is \emph{ego surfing}, a common practice where users search for their name, social security number, or social profile, potentially allowing the \ac{IRS} to link the request with a real person. Recent studies have shown how privacy violations in analyzing search histories can leak the political views~\citep{le2019googlepolitical,mustafaraj2020politicalbias} or the sexual orientation~\citep{malandrino2013privacy,shekhawat2019genderbiasinads} of a user. Such knowledge can lead to unfair treatment and risks in illiberal countries.

To address this, we would like to allow the user to send the \ac{IRS} queries that are not sensitive, but still allow the user to retrieve relevant documents as if the original sensitive query was sent. This would prevent the adversarial (i.e., the \ac{IRS}) from profiling the user.
This threat model requires us to operate under the \emph{non-collaborative} \ac{IRS} assumption: we consider the \ac{IRS} as not interested in protecting the user's privacy, but rather as a potential threat. 
For obvious reasons, encryption, in this case, is not a suitable protection mechanism. The \ac{IRS} is supposed to answer the user's query thus being it is its intended receiver. While encryption can be effectively adopted to protect the query from external eavesdroppers -- assuming the \ac{IRS} is willing to agree on an encryption schema --, it does not provide any guarantee concerning the information leakage that might occur directly on the \ac{IRS} side.
Secondly, under the hypothesis that the \ac{IRS} is non-cooperative, we might not be able to agree upon an encryption protocol for communicating with the \ac{IRS}.


One possibility to mitigate such a threat involves applying text obfuscation mechanisms.
Broadly speaking, an obfuscation mechanism takes in input a sensitive piece of text $t$ and outputs a new piece of obfuscated text $t'$, that is semantically related to $t$, but not sensitive.

The text obfuscation task has been extensively investigated in the \acf{NLP} domain. 
The most common scenarios in which text obfuscation is used in \ac{NLP} involve tasks such as sentiment analysis, spam detection, and text classification~\citep{zhao2022survey,feyisetan2019leveraging,feyisetan2020multivariate,feyisetan2021private,xu2020mahalanobis,XuAggarwalEtAl2021}.
In \ac{NLP}, the most common framework to obfuscate and release text is \acf{DP}~\citep{dwork2006calibrating}.
State-of-the-art solutions based on \ac{DP} for text obfuscation involve perturbing each word of a piece of text~\citep{zhao2022survey,feyisetan2019leveraging,feyisetan2020multivariate,feyisetan2021private,xu2020mahalanobis,XuAggarwalEtAl2021}. More in detail, given a non-contextualized embedding of each word in the text, such embedding is perturbed by adding appropriately sampled noise. Then, a new obfuscated text is constructed by taking, for each perturbed word embedding, the closest word to it. In this way, the original content of the text is safeguarded. 

Nevertheless, when it comes to \ac{IRS}, obfuscation approaches designed for \ac{NLP} do not necessarily work smoothly, for two main reasons.
First, when it comes to \ac{NLP}, the obfuscated text can be used both at training and inference time: this allows a machine learning mechanism (e.g., a classifier) to account for the noise introduced in the text. The same cannot occur in our scenario. Indeed, we assume the \ac{IRS} to be non-cooperative: it would be impossible to operate on its training to account for the obfuscated queries.
Secondly, one of the challenges linked to \ac{DP} is that too-small noise can lead to catastrophic data releases in which the text is not obfuscated. In other terms, tuning the privacy guarantees (i.e., modifying the privacy budget $\varepsilon$) of a DP mechanism provides privacy formally but, in practice, it does the obfuscated text. This is a well-known and severe problem of \ac{DP}~\citep{DomingoferrerSanchezBlancojusticia2021}. Indeed, none of state-of-the-art solutions proposed for text obfuscation in \ac{NLP}~\citep{zhao2022survey,feyisetan2019leveraging,feyisetan2020multivariate,feyisetan2021private,xu2020mahalanobis,XuAggarwalEtAl2021}, fully overcomes the problem, as they still allow the original word to be picked as the obfuscation term and highly semantically similar terms.

Besides generic approaches to deal with \ac{NLP} at large, some efforts have been specifically targeted to the query obfuscation task, in which the text to be obfuscated is a query that needs to be sent to a \ac{IRS}. 
The first of such approaches, proposed by \citet{arampatzis2013queriesscrambler}, replaces words in queries with randomly sampled hypernyms of them, under the assumption that more general words are less sensitive. On the contrary, the approach proposed by \citet{frobe2022efficentqueryobf} employs a local corpus to determine which terms co-occur with the query terms in the documents the most frequently. Using this information, they construct obfuscation queries that only contain terms that often co-occur with query terms, so that both the original and obfuscated queries will retrieve similar documents.
Nevertheless, by relying on strictly semantic relations between words and terms co-occurrences, current approaches might release terms that include words highly semantically related to or synonyms of those originally contained in the queries.
Furthermore, no approach to query obfuscation for \ac{IRS} is based on \ac{DP}, thus making it impossible to grant demonstrable privacy to the obfuscation queries produced. 
More recently, \cite{FaggioliFerro2024} demonstrated the effectiveness of \ac{NLP}-inspired \ac{DP} techniques in the \ac{IR} domain. Nevertheless, they do not propose a new approach that would be more effective in the \ac{IR} domain. 

To overcome all the aforementioned limitations, we propose \acf{WBB}, a novel approach to safeguard user privacy by obfuscating the words in the original query, creating a safe box around the query terms in the non-contextual embedding space, which ensures that the query words and terms too similar to the original ones will not be present in the final obfuscated output. Furthermore, we ensure that the outcome of our approach is private by using \ac{DP} to select which terms to use for the obfuscated query among those outside the safe box. This guarantees that our proposed solution not only satisfies the privacy requirements from a theoretical point of view but, thanks to the safe box, also provides practical privacy for the user. To the best of our knowledge, \ac{WBB} represents the first \ac{DP} approach explicitly designed for the query obfuscation task in the \ac{IR} context.

By conducting extensive experimentation on both classical (Robust '04) and neural-oriented (Deep Learning '19) TREC collections, the \ac{WBB} method is evaluated and compared with current State-of-the-Art mechanisms in three aspects: privacy, recall, and utility.
\begin{itemize}
    \item Firstly, we determine if our protection mechanism can provide sufficient \textbf{privacy guarantees}, beyond the formal privacy granted by using \ac{DP}. To do so, we compute the lexical and semantic similarity between the original and obfuscated queries, showing that our mechanism induces sufficiently obfuscated queries.

    \item Secondly, one of the risks linked to obfuscation mechanisms, especially when high privacy is enforced, is that obfuscated queries might not be able to \textbf{retrieve relevant documents}. Therefore, we are interested in determining if the obfuscation queries can retrieve relevant content. To do so, we measure the recall observed when the documents retrieved by different obfuscation queries are combined, that is what the user would do upon receiving the \ac{IRS} answers to the obfuscation queries. 

    \item Finally, we are interested in \textbf{determining the utility preserved} by our obfuscation mechanism. To do so, we assume the user reranks the documents received by the \ac{IRS} using the original query. This is a safe operation as it occurs on the user side. As a proxy of the experienced utility, we compute the nDCG@10 for the reranked list of documents.
\end{itemize}


The paper is organized as follows: Section~\ref{sec:theoreticalBackground} introduces the preliminaries on \ac{DP}, along with important \ac{DP} mechanisms that the \ac{WBB} uses. Moreover, Section~\ref{sec:background-on-text-obfuscation} describes the \ac{NLP} method used for text obfuscation, while Section~\ref{sec:methodology} explains our methodology; Section~\ref{sec:experimentalAnalysis} reports the results of our experiments; finally, Section~\ref{sec:relatedWorks} presents related works, and Section~\ref{sec:conclusionsAndFutureWorks} draws some conclusions and outlooks for future work. 
\section{Background on Differential Privacy}
\label{sec:theoreticalBackground}
This section reports the theoretical foundations of the \ac{DP} framework, underlying the mechanism designed in this work.

\subsection{Privacy Definitions}
\acf{DP}, introduced by~\citet{dwork2006calibrating}, is, de facto, the gold-standard formal definition used to determine if an algorithm protects privacy. Intuitively, a \ac{DP} mechanism adds, during the computation, an appositely crafted amount of noise that depends on the privacy budget $\varepsilon$, which sets the trade-off between data privacy and utility.

The definition of $\varepsilon$-\ac{DP}~\citep{dwork2006calibrating} states that a randomized mechanism $\mathcal{M}$ (i.e., an algorithm that takes a certain input and produces a noisy output) is $\varepsilon$-\ac{DP} if, for any pair of neighbouring datasets $D$ and $D'$, i.e., two datasets differing of a single record, and a privacy budget $\varepsilon\in\mathbb{R}^{+}$, it holds:
\begin{equation}
    \text{Pr}\{\mathcal{M}(D)\in \mathcal{S}\}\leq e^{\varepsilon}\text{Pr}\{\mathcal{M}(D')\in \mathcal{S}\},\ \forall \mathcal{S}\subseteq \text{Image}(\mathcal{M})
\label{eq:DP}
\end{equation}

If the mechanism is $\varepsilon$-\ac{DP}, the definition grants that for every run of the randomized mechanism $\mathcal{M}$, the output (i.e., a value from $\mathcal{S}$) is almost equally likely to be observed on every neighbouring dataset simultaneously. By definition, lower values of $\varepsilon$ guarantee higher levels of privacy: if $\varepsilon=0$, then $\text{Pr}\{\mathcal{M}(D)\in \mathcal{S}\} = \text{Pr}\{\mathcal{M}(D')\in \mathcal{S}\}\ \forall \mathcal{S}\subseteq \text{Image}(\mathcal{M})$, i.e., the output does not depend on the input.
Intuitively, given two similar yet different inputs, we expect the output to be the same with a certain probability regulated by $\varepsilon$. This grants ``Plausible deniability'': the adversarial cannot deem with absolute certainty which input (i.e., user's data) corresponds to a given output.

\subsection{The Exponential Mechanism}
The mechanism used to sample words underlying \ac{WBB} builds upon the $\varepsilon$-\ac{DP} exponential mechanism~\citep{mcsherry2007exponential, dwork2014privacybook}. The general idea underneath the exponential mechanism is that given a certain input $x\in\mathcal{I}$, a range of interest $\mathcal{R}$, and a utility function $u$ such that $u:\mathcal{I}\times\mathcal{R}\to\mathbb{R}$, we are interested in outputting a value $r\in\mathcal{R}$ that, given the input $x$, maximizes the utility while remaining \ac{DP}. 

More in detail, to implement the exponential mechanism, it is first necessary to compute its \textit{sensitiveness}, which corresponds to the maximum difference in utility that can be observed for any possible value of the range $\mathcal{R}$ and any possible pair of neighbouring inputs (i.e., highly similar input data). The sensitivity of a given utility function $u$ is computed as:
\begin{equation}
    \Delta u = \max_{r\in \mathcal{R}}\max_{x,y:\|x-y\|_{1}\leq 1} |u(x,r)-u(y,r)|
\label{eq:utiltydef}
\end{equation}

The exponential mechanism outputs a value $r\in\mathcal{R}$ with probability proportional to $\exp\left(\frac{\varepsilon u(x,r)}{\Delta u}\right)$. This ensures that the output depends on the utility of $r$ given $x$ and the $\varepsilon$ used: a small $\varepsilon$ ensures that the utility plays a less and less prominent role in the sampling.
In practical terms, when instantiated in our scenario, given an input word, we sample another word based on some utility measure (e.g., the similarity with the original term). We are not guaranteed the most useful word is sampled, i.e., the most similar, as this would expose the input, but we are likely to sample a high utility term.

\section{Background on Text Obfuscation}
\label{sec:background-on-text-obfuscation}

In this section, we describe existing background and approaches to text obfuscation. We start from approaches designed for \ac{NLP} tasks, where it is the state-of-the-art to employ \ac{DP}. Then, we move to the specific \ac{IR} problem, where we observe that the approaches to obfuscate the queries are based on statistics of their terms.
Finally, we detail some approaches to measure the privacy provided by a given obfuscation approach.

\subsection{Text Obfuscation in NLP}
\label{subsec:DP-mechanisms}

The general task of a text obfuscation mechanism consists of taking in input sensitive piece of text $t$ and outputting a new piece of obfuscated text $t'$, that is semantically related to $t$, but not sensitive.
At this point, being $t'$ safe from the privacy perspective, it can be released without risking the user's privacy. More in detail, such a piece of text can be used for example to train a machine learning model in some downstream task. For example, $t$ might be a post written by a user on a social network. Assume now the objective is to train a sentiment classifier. To avoid risking the users' privacy, the text of the post can be first obfuscated and then passed as a training example to the classifier.
For most tasks, a single obfuscated text might not be sufficiently rich to convey the complexity of $t$, thus it is common to generate several pieces of non-sensitive text $t'_1$, $t'_2$, ..., $t'_n$.

Most recent state-of-the-art approaches for text obfuscation in \ac{NLP} at large employ~\ac{DP}~\citep{feyisetan2020multivariate,chen2023customized,yue2021santext,xu2020mahalanobis,carvalho2023tem,XuAggarwalEtAl2021}.
Most of these mechanisms adopt a similar strategy to obfuscate the text. Assume a piece of text $\mathcal{W}^l$ of length $l$ needs to be obfuscated. Let $\mathcal{V}$ be the vocabulary and $\phi$ be a non-contextualized word encoder such as Word2Vec~\citep{mikolov2013word2vec} or GloVe~\citep{pennington2014glove}. This word encoder takes in input a word $w$ and generates a non-contextual embedding representation $\phi(w)\in\mathbb{R}^d$, where $d$ is the dimension of the word embedding.
Let us call $\nu(\varepsilon)$ the noise appositely crafted accordingly to the noise distribution function $\nu$ to satisfy the \ac{DP} constraint. 

Then, given a word $w$, it is obfuscated with a word $w'$ as follows:
\begin{equation}
    \argmax_{w'\in \mathcal{V}} ||\phi(w') - (\phi(w)+\nu(\varepsilon))||
\end{equation}
Where $||\cdot||$ represents any norm (e.g., the Euclidean norm).
In other terms, the word $w'$ used to obfuscate $w$ is the one that has the representation closest to the non-contextual representation of $w$ perturbated with noise $\nu(\varepsilon)$.
These mechanisms were originally designed to operate in an \ac{NLP} context and explicitly for tasks such as sentiment analysis and spam detection. Notice that, a major difference between \ac{NLP} and \ac{IRS}, is that in the former case, it is possible to fine-tune the model on some obfuscated text. This ensures that the model latently learns how to account for the shift in the language model operated by the obfuscation mechanism.
In the case of \ac{IRS}, we cannot apply the same reasoning: under the hypothesis of a non-cooperative \ac{IRS}, we have no access to the model it relies upon, and we cannot retrain it externally. 

We now provide examples of works that explicitly employ~\ac{DP} in the \ac{NLP} domain.
\citet{zhao2022survey} provided a comprehensive overview of DP strategies literature to protect sensitive information in unstructured data. \citeauthor{zhao2022survey} highlighted the prominence of the trade-off between privacy and utility in privatizing sensitive textual information. Such privacy \emph{versus} utility trade-off originates because enhancing privacy involves altering the original data, which reduces the specificity and accuracy of the data. Therefore, the challenge is finding an optimal balance where privacy is sufficiently guaranteed without significantly compromising the privatized data utility. Hence, the authors conclude that \ac{DP} has a significant, unexplored potential to improve data utility while guaranteeing high privacy protection levels for unstructured data such as texts.

\citet{feyisetan2019leveraging, feyisetan2020multivariate, feyisetan2021private} proposed \ac{DP} methods based on the noise addition to word embeddings. Such methods depend on defining \emph{metric} \ac{DP} to perturb vectors. Metric-DP is a relaxation of the original definition introduced by \citet{Chatzikokolakis2013broadening}, which considers the metric function introduced in the vector space used for the word embeddings. 
The general idea is to take the non-contextualized embedding of the terms in the text, inject some noise, and consider the word whose embedding is the closest to it.
\citet{feyisetan2020multivariate} explored how to balance the trade-off between privacy and utility using calibrated noise and metric DP by leveraging geometric properties of word embeddings and practically evaluating the utility of the resulting scrambled texts. The study explores the impact that calibrated noise can have on helping users preserve their privacy in text obfuscation in NLP tasks. Yet, in such studies, privacy is measured without considering concretely how the mechanism carries on the obfuscation and how the obfuscated word is related to the original one.


\citet{xu2020mahalanobis} started from the observation that the mechanism proposed by \citet{feyisetan2020multivariate} is likely to obfuscate a word with itself. To account for this, propose a mechanism for DP text perturbation that calibrates the noise injection by incorporating the regularized Mahalanobis metric. On the same line, \citet{XuAggarwalEtAl2021} further extended~\cite{feyisetan2020multivariate} by proposing a mechanism based on the Vikrey mechanism, where a word can be obfuscated with either the most similar or the second most similar approaches.
In addition, other methods proposed for achieving good levels of text obfuscation using DP have been studied by \citet{yue2021santext, chen2023customized}. On the one hand, \citet{yue2021santext} implemented a \ac{DP} mechanism, \textsc{SanText}, to sanitize texts to achieve good levels of utility while still protecting the defined sensitive words in the text. On the other hand, \citet{chen2023customized} presented \textsc{CusText}, a privatization mechanism able to adapt with any similarity measure to balance the privacy-utility trade-off. While going in the direction of overcoming the limitation linked to a word being obfuscated by itself, this possibility remains present in both mechanisms, thus keeping the user exposed to the risk of catastrophic data release.

On a different line, \citet{carvalho2023tem} suggested the Truncated Exponential Mechanism to ensure user privacy in NLP model training. The algorithm utilizes the exponential mechanism~\cite{mcsherry2007exponential} to change the privatization procedure into a selection problem, enabling noise calibration based on embedding space density around a given input. Nevertheless, the limitation of the mechanism originates from its inability to include the privatization of word context vectors, consequently constraining the guaranteed privacy.

\subsection{Query Obfuscation in IR}

We introduce in this section the generic query obfuscation pipeline and describe how it has been instantiated in practice.

\subsubsection{The query obfuscation pipeline}

\begin{figure*}[h!]
    \centering
    \includegraphics[width=\linewidth]{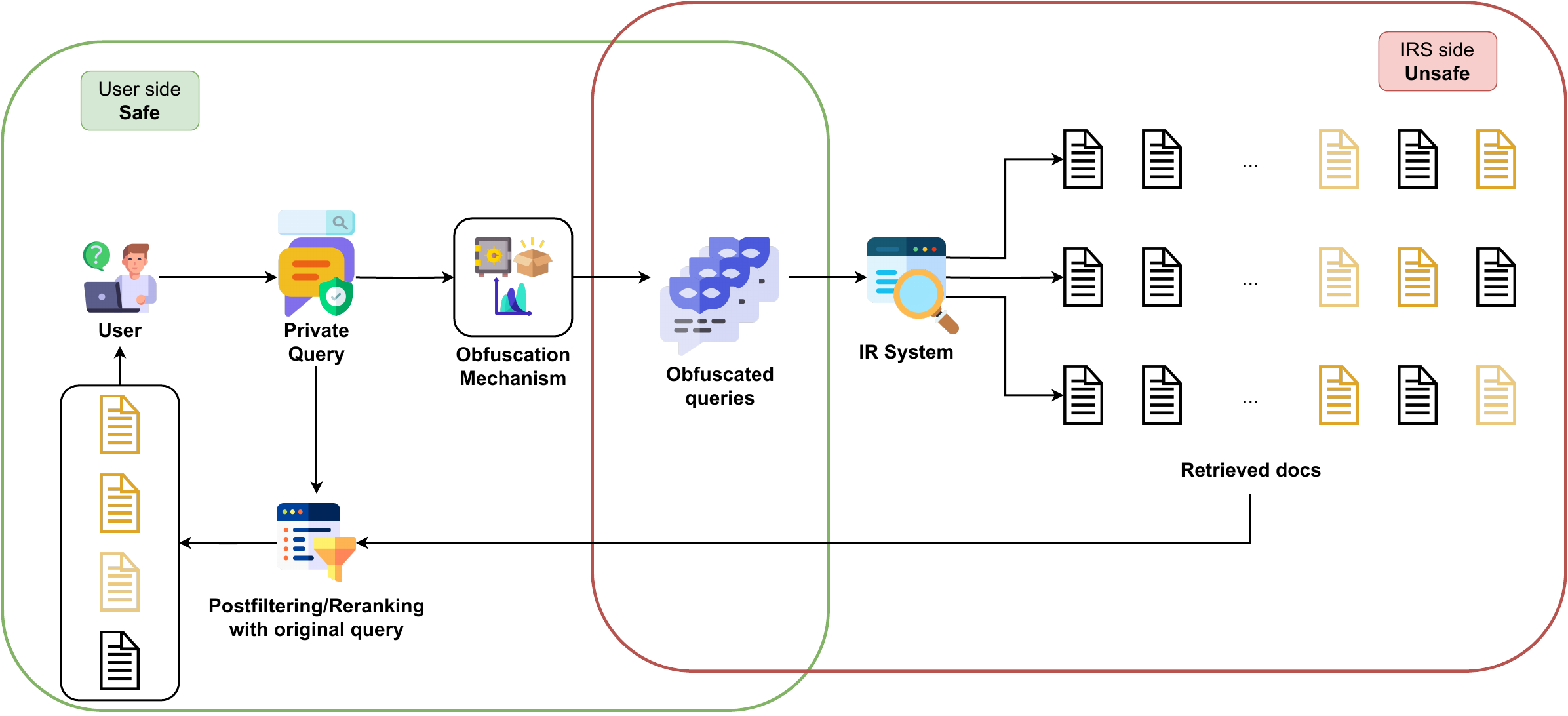}
    \mycaption{General overview of the query obfuscation pipeline in \ac{IR}. The diagram illustrates the pipeline of the retrieval, showing the steps on the User, safe, and \ac{IRS}, unsafe, side.}
    \label{fig:pipelinemoverview}
\end{figure*}

Figure~\ref{fig:pipelinemoverview} illustrates the general query obfuscation pipeline analyzed in this work and commonly studied in \ac{IR} scenarios implementing query obfuscation protocols. 
The process occurs at two distinct places: on the user side (\emph{safe}), where the user writes their private query and obfuscates it, and on the \ac{IRS} side (\emph{unsafe}), where the \ac{IRS} retrieves the documents given the obfuscated queries received. The system is oblivious to the real information need. 

To define a query obfuscation mechanism, we take inspiration from the \ac{NLP} case. Therefore, in the \ac{IRS} scenario, given a sensitive query $q$ (``Private Query in Figure~\ref{fig:pipelinemoverview}), we apply an obfuscation mechanism (``Obfuscation mechanism'' in Figure~\ref{fig:pipelinemoverview}) to generate several non-sensitive queries $q'_1$, $q'_2$, ..., $q'_n$ (``Obfuscated queries'' in Figure~\ref{fig:pipelinemoverview}). This occurs on the user side therefore it is safe from the privacy perspective.
Once the obfuscated queries have been generated, they can be safely transmitted to the \ac{IRS}.
These queries, are only semantically related to $q$ but focus on different topics. Therefore, it is likely that they will not retrieve all and every document relevant to $q$, and most likely not on the top positions. Therefore, the user transmits the $n$ obfuscation queries $q'_{\dots}$ to the \ac{IRS}. Upon receiving these queries, the \ac{IRS} computes a ranked list of documents for each of them (i.e., ``Retrieved docs'' in Figure~\ref{fig:pipelinemoverview}). Notice that this does not require the \ac{IRS} to be aware of the obfuscation mechanism being in place: the \ac{IRS} behaves as a black box that takes in input a query and outputs a list of documents.  
The user then receives a set of lists of documents. As it is likely that relevant documents are not in the first positions, post-filtering and re-ranking are performed using the original user's query. Finally, the user obtains a new ranked list with the likely relevant documents placed in the first positions.

We want to stress that, in this case, cryptographic protocols do not provide privacy for the user. The non-cooperative \ac{IRS} is the intended receiver of the query; hence it should be the only one supposed to decipher the cryptographic message. However, in the scenario proposed, the \ac{IRS} is interested in profiling the user and thus, cryptography is not enough to achieve privacy.
To protect the query from third-party eavesdroppers, safe communication protocols based on private or public key encryption, e.g., AES~\citep{rijndel2002AES} or RSA~\citep{rivest1978RSA}, can be employed. Nevertheless, these methods do not ensure privacy but only the confidentiality of communication between the user and the \ac{IRS}.


\subsubsection{State-of-the-art Approches}
Privacy via query obfuscation in IR without considering the definition of \ac{DP} has been studied in prior works. \citet{arampatzis2013queriesscrambler} proposed an obfuscation mechanism based on Word-Net~\citep{miller1995wordnet} in which each term in the query is generalized, implementing a hierarchical definition using the extracted synonyms, hypernyms, and holonyms of the word. The candidate obfuscation queries are then computed considering the Cartesian product of the terms sets and filtered out considering the similarity between the original and obfuscated queries to avoid exposing queries. An extension of the studies of \citet{arampatzis2013versatiletoolprivacywebsearch,arampatzis2015versativescramblerprivatewebsearch} have been introduced by~\citet{frobe2022efficentqueryobf} that developed a statistical query obfuscation method using the user's query on the local corpus. 

\citet{FaggioliFerro2024} bridged the gap between the two aforementioned research areas by showing how \ac{DP} mechanisms borrowed from \ac{NLP} may perform for query obfuscation in \ac{IR}. Nevertheless, \citeauthor{FaggioliFerro2024} directly applied such state-of-the-art \ac{DP} mechanisms devised for \ac{NLP}~\citep{XuAggarwalEtAl2021,xu2020mahalanobis,feyisetan2020multivariate}, falling short in devising a \ac{DP} approach that does not risk obfuscating a query with itself.

Therefore, to the best of our knowledge, this is the first effort to design and formally prove an $\varepsilon$-\ac{DP} method directly for the query obfuscation task in \ac{IR}.

\subsection{Privacy Measures}
\label{subsec:Privacy-metrics}

One of the major challenges when it comes to dealing with privacy and text concerns the evaluation of how much information is leaked by the obfuscated text compared to the original one. Measuring if two pieces of text convey the same meaning is a very challenging task: two sentences might be syntactically different but semantically equivalent. Vice versa, small changes, even of a few characters, might completely change the meaning of a piece of text. Additionally, it is hard to establish if something is sensitive and leaks private information. Often, the degree of sensitivity depends on the context: we might not feel comfortable sharing the very same information with a stranger as we would tell our doctor.

Therefore, frequently employed metrics in the fields of both \ac{NLP} and information security to assess the level of privacy are typically categorized based on the specific aspect of privacy they are measuring~\citep{wagner2018privacymetricssurvey}: Uncertainty Statistics, Lexical Similarity, Semantic Similarity.

\textit{Uncertainty Statistics} measure the probability of an obfuscation approach failing at the level of the single term. The two most adopted measures are $N_w$ and $S_w$. More in detail:
\begin{itemize}
    \item $N_{w} = \text{Pr}\left[\mathcal{M}(w)=w\right]$. It represents the probability that the mechanism obfuscates the word with itself (i.e., it fails the obfuscation). 
    \item $S_{w} = \min\left|\left\{\mathcal{S}\subseteq \text{Image}(\mathcal{M}) : \text{Pr}\left[\mathcal{M}(w)\notin \mathcal{S}\right]\leq\eta\right\}\right|$. It corresponds to the minimum size of the set that contains the words used to obfuscate the same input word.
 \end{itemize}
While these measures are simple to compute, they are too fine-grained to provide real insight into whether a piece of text has been correctly obfuscated: not all words are equally sensitive.

\textit{Lexical Similarity} measures try to overcome the limitations of uncertainty statistics, by measuring the quality of a piece of obfuscated text at the global level. Examples of such measures include approaches drawn from the automatic translation (with the caveat that a high lexical similarity indicates low obfuscation and -- likely -- low privacy). Examples of such measures include the Jaccard similarity, BLEU~\cite{PapineniRoukosEtAl2002}, or METEOR~\cite{LavieAgarwal2007}. In most cases, these measures involve computing the overlapping between the original and obfuscated text -- or their n-grams.
To provide an example, the Jaccard similarity is defined as follows:
$$\text{Jaccard Similarity}(A, B)=\dfrac{|A\cap B|}{|A \cup B|},$$
this measure is computed over the set of words in the original query $A$ and the obfuscated ones $B$, measuring the proportion of overlapping terms.
Even though they represent an improvement over uncertainty statistics with a global view of the text, lexical similarity approaches still lack sufficient semantic understanding to fully grasp the complexity underlying privacy-related tasks.

\textit{Semantic Similarity} measures employ a notion of semantic similarity to determine if a piece of text is an effective obfuscation of another one: too similar pieces of text do not guarantee privacy and to different texts do not provide utility. 
Automatically embedding the semantics of a text is a complex task, typically addressed by encoding the text in a latent space. The approach current state-of-the-art solutions for this task, such as BERT~\cite{DevlinChangEtAl2019}, involve the usage of the transformers architecture~\cite{VaswaniShazeerEtAl2017}.
In this case, the simplest strategy to compute the similarity between two texts involves computing their embedding $t$ and $t'$ in a latent space employing a suitable encoder (e.g., BERT). Then, the semantic similarity is approximated by the cosine similarity between the two vectors:
$$\text{Semantic Similarity}(t,t')=\dfrac{t\cdot t'}{\lVert t\rVert\lVert t'\rVert}.$$
While this approach can grasp the semantic similarity between two sentences, it is tightly linked to the encoder used to compute the embeddings. This means that 
i) different results might be observed depending on the encoder at hand; ii) if the obfuscation model relies on the same encoder used for the evaluation, there might be some leakage.

In this paper, we measure the similarity between the original and obfuscated queries in terms of Jaccard Similarity and Semantic similarity, to provide both lexical and semantic points of view. As mentioned before, a high similarity is an indication of low obfuscation (i.e., the original and obfuscated queries are too similar), thus weak privacy guarantees.

    


\section{Methodology}
\label{sec:methodology}
In this section, we describe the motivations for this study and the methodology underlying \ac{WBB} to bring concrete privacy for the user's queries, also providing a formal proof of how such a mechanism achieves $\varepsilon$-\ac{DP}.

\subsection{Motivations of the study}
\label{subsec:Language-leakage}
Most of the current approaches designed to obfuscate texts~\citep{xu2020mahalanobis, feyisetan2021private}, including queries in the \ac{IRS} scenario~\citep{FaggioliFerro2024}, as well as approaches to author attribution masking~\citep{weggermann2018syntf}, do consider that, even though the query is obfuscated, it might still reveal too much information by releasing synonyms of the original terms.
In previous work, it has been assumed that using synonyms or hyponyms does not significantly impact safeguarding text privacy~\citep{xu2020mahalanobis,feyisetan2021private,weggermann2018syntf,arampatzis2013queriesscrambler}. However, these words often have a similar meaning to the original terms and they may not offer adequate protection. Suppose a user is issuing the query ``treatment for skin cancer''. Consider now an obfuscation mechanism, as those described in Subsection~\ref{subsec:DP-mechanisms}, that adds a certain noise to the embedding of the term ``cancer", resulting in its mapping to a new obfuscated term, namely ``melanoma". According to the mechanism's perspective, the obfuscation successfully substituted the word ``cancer''. Nevertheless, from a human perspective, the final obfuscated query will be the following: ``treatment for skin melanoma". The adopted approach might be considered as a satisfactory obfuscation, but, in reality, any human would easily understand the actual user's information need, causing severe information leakage. We argue that it is not sufficient to add controlled statistical noise to word embeddings to ensure the privacy of the queries.

Another motivation for a concrete privacy-preserving mechanism is related to cryptography used to protect privacy: cryptographic protocols may not be the appropriate solution for safeguarding privacy in this scenario, as the ciphertext is not directly employed by the \ac{IRS} for document selection and retrieval. Therefore, while cryptographic solutions may protect confidentiality from external eavesdroppers in an effective manner, they do not represent the appropriate solution for privacy in such non-cooperative systems.

\paragraph{Semantically related words in embedding spaces}

Following previous literature~\citep{chen2013expressivnesswe,ono2015antonymdetection}, we analyzed how embeddings of synonyms and hyponyms spread in the vector space and how the embeddings of words belonging to such a vector space of dimension $d$, i.e., the Euclidean space $\mathbb{R}^{d}$, distributes reciprocally. For such a preliminary analysis, we studied a geometric model in $\mathbb{R}^3$ of the relative positioning of two generic vectors: to assess the degree of semantic similarity of two arbitrary words $v$ and $w_1$, where $v$ acts as query term and $w_1$ as obfuscation term, we project them on the non-contextual embedding space $\phi(v)=V$ and $\phi(w_1)=W_1$. We consider two aspects: the Euclidean distance and the angle between the vectors $V$ and $W_1$. To measure these values, we first compute the plane $\pi$ generated by a pair of linearly independent vectors $(e_1,e_2)$, i.e., $\pi=\braket{e_{1},e_{2}}$ containing both the embedding vectors of the words. Figure~\ref{fig:geometricintuition} shows a geometric representation of the situation.

\begin{figure}[h!]
        \centering
            \subfigure[Model 3D - Vector positioning of $V$ and $W_{1}$.] 
            {
                \label{subfig:3dmodel}
                \includegraphics[width = 0.45\textwidth]{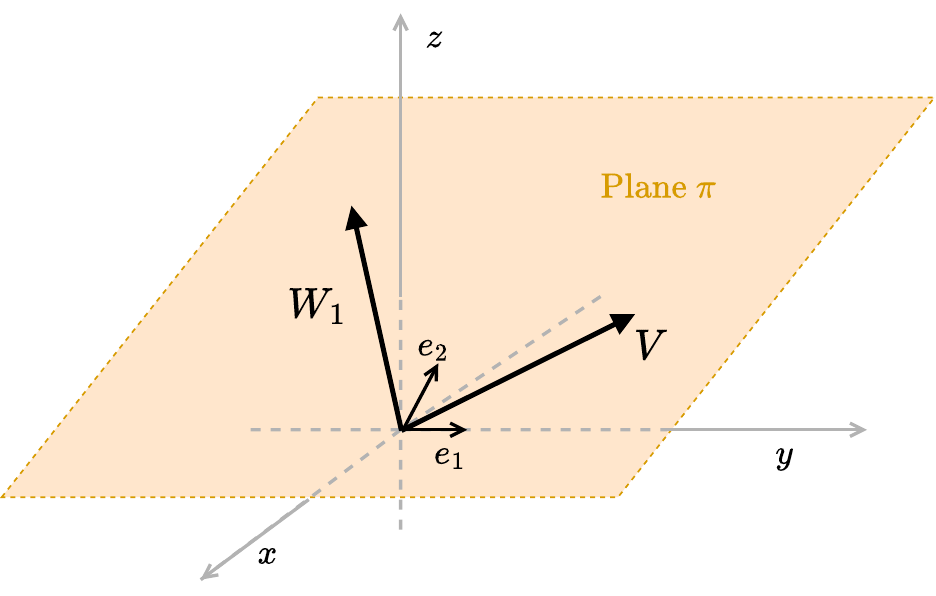} 
            }
            \subfigure[Model 2D - Relative positioning of $V$ and $W_{1}$ in the plane $\pi$.] 
            {
                \label{subfig:2dmodel}
                \includegraphics[width = 0.45\textwidth]{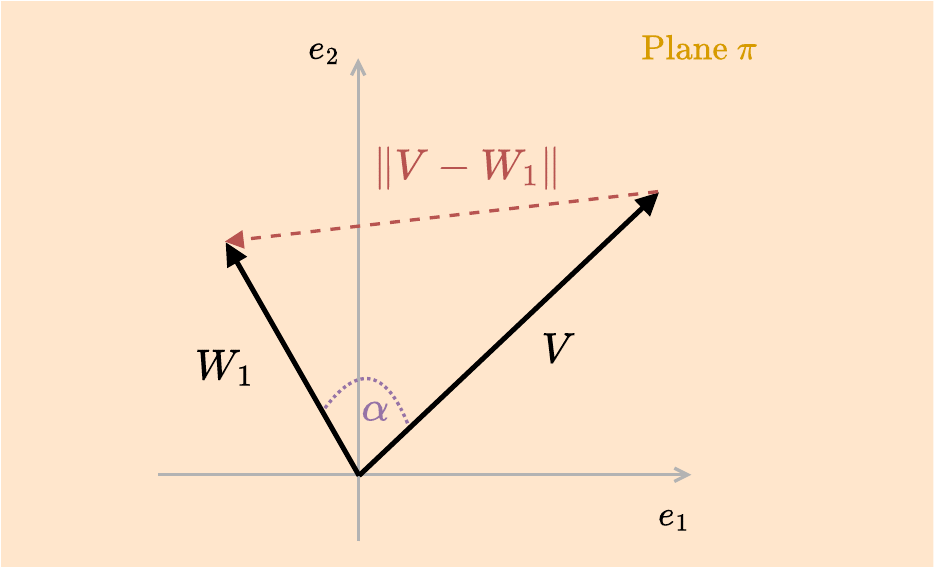} 
            }
         \mycaption{Geometric intuition of the vector space.}
        \label{fig:geometricintuition}
\end{figure}

Figure~\ref{subfig:2dmodel} shows the relative positioning of the 3-dimensional vectors of Figure~\ref{subfig:3dmodel} in bi-dimensional plane $\pi$. We then compute the angle $\alpha$ and the distance between the two vectors to understand how much they differ from each other. We tested multiple words, randomly sampling 10,000 words from the GloVe~\citep{pennington2014glove} dictionary, and, by using different embedding dimensions (50, 100, 200, 300), we observed that, employing the synonyms and hyponyms provided by the Natural Language ToolKit (NLTK)~\citep{bird2004nltk}, most similar words distribute closer considering the Euclidean distance and the angle in the theoretical model. 

\begin{figure}[h!]
    \centering
    \includegraphics[width=1\linewidth]{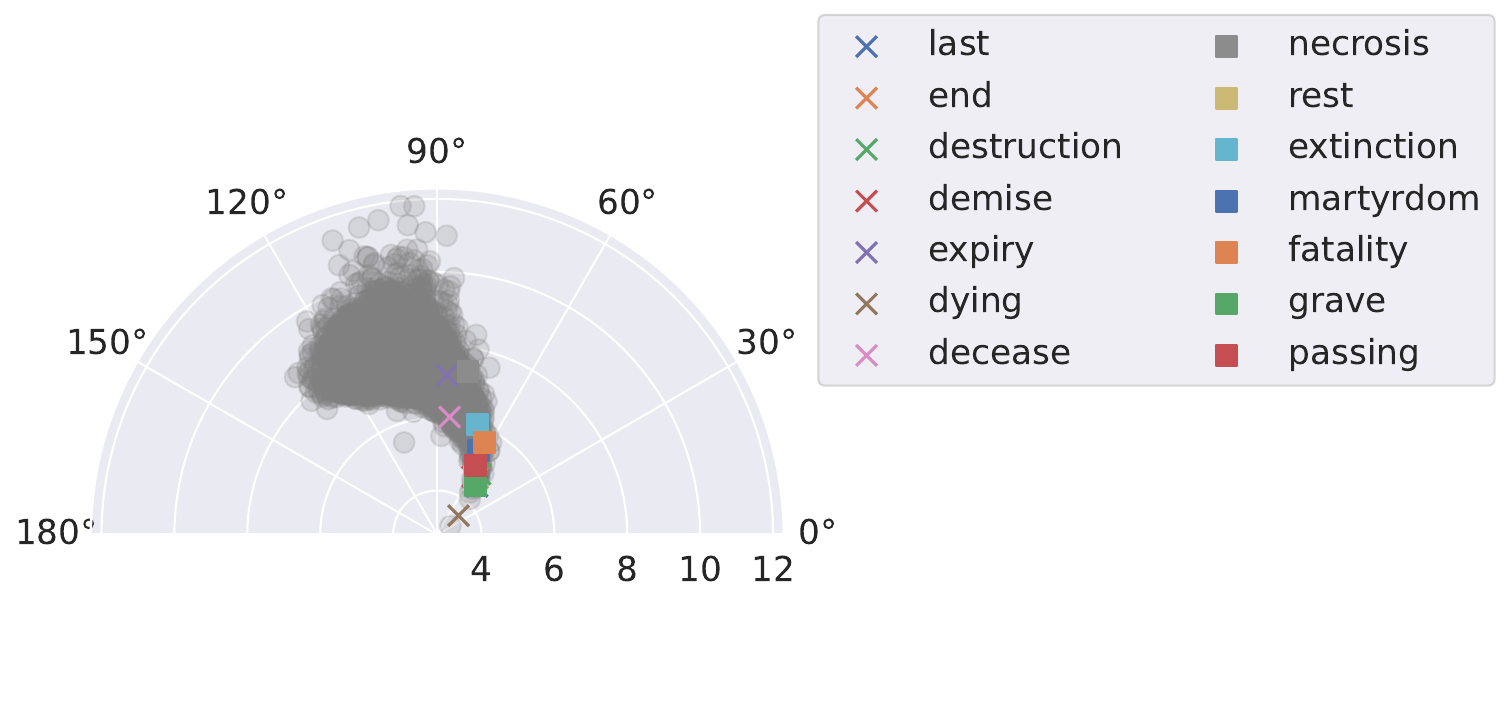}
    \mycaption{Radar plot showing the distribution of sampled words and the word ``Death" considering Euclidean distance and angle. The symbols of crosses ($\times$) and squares ($\square$) are utilized to represent the linguistic relations of a given word. Specifically, the crosses indicate the hyponyms,  while the squares represent the synonyms. The grey circles ($\circ$) represent other words that are neither hyponyms nor synonyms.}
    \label{fig:radar}
\end{figure}
Figure~\ref{fig:radar} presents a radar plot that illustrates the relationship between various synonyms and hyponyms of the word ``Death'', which can be considered sensitive in some scenarios, providing an understanding of the distribution of these words. We noticed that direct hyponyms, represented by a cross ($\times$), and synonyms, represented by a square ($\square$), are in the immediate neighbourhood of the word of interest (the centre of the polar coordinates). Between the centre of the radar and the similar words highlighted, there are many unrelated terms ($\circ$ symbol in Figure~\ref{fig:radar}), i.e. words that are not synonyms and hyponyms of the word of interest. To mitigate the risk of selecting a too similar word as obfuscation term, our proposal consists of creating a ``safe box" around the word embedding, in which no obfuscation words can be sampled. The notion of a safe and candidate box is inspired by literature in the domain of positional masking~\citep{allshouse2010geomasking,hampton2010mapping}. For instance~\citet{hampton2010mapping} proposed to mask geographic points by projecting them on a new location which is randomly selected to fall on a ring, i.e., the space between two concentric disks centred in the original data point, ensuring a minimum of privacy, since the position of the original point is always masked outside the inner disk. Also, we propose to use a second area around the term that needs to be obfuscated, called ``candidate box" so that the terms selected will be similar enough to the original query. Hence, the name of the approach: \acf{WBB}.

\subsection{WBB Mechanism}

Figure~\ref{fig:wbbOverview} illustrates the different steps of the \ac{WBB} mechanism for obfuscating the user query. In the following subsections, we present the details of each obfuscation step. Finally, we provide the mathematical proof of the \ac{DP} property for the proposed mechanism.

\begin{figure}[h!]
    \centering
    \includegraphics[width=1\linewidth]{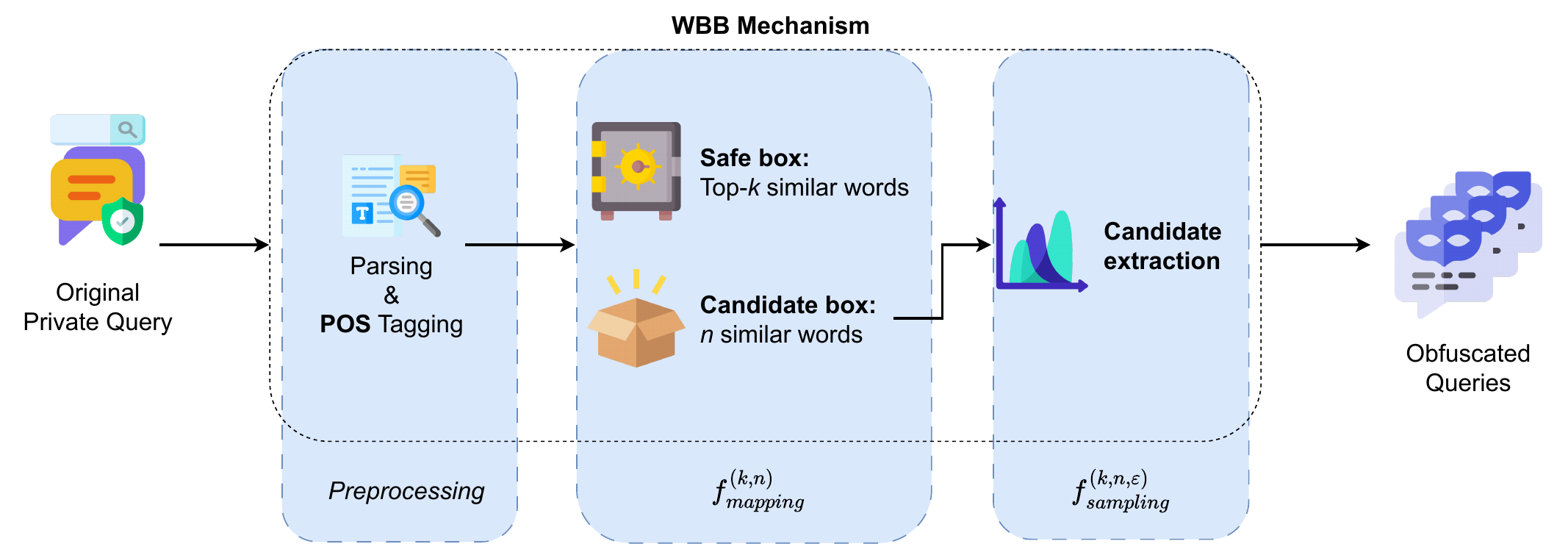}
    \caption{\ac{WBB} mechanism schematic overview of the obfuscation procedure. Compared to the pipeline presented in Figure~\ref{fig:pipelinemoverview}, this is the ``Obfuscation Mechanism'' component.}
    \label{fig:wbbOverview}
\end{figure}

\subsubsection{Preprocessing}
Firstly, \ac{WBB} preprocesses the original query text. This includes converting the text to lowercase and tokenizing it (i.e., splitting it into isolated terms). Such processing is performed to normalize the text and compute the embeddings accurately.
Subsequently, inspired by  \citet{peng2022semanticspreserved} who observed that perturbing only names and adjectives can significantly enhance the empirical privacy of texts without compromising the performance of the downstream task, we use \ac{POS} tagging to recognize names and adjectives that will be later obfuscated by \ac{WBB}.

Finally, all words are encoded using a non-contextualized encoder $\phi$.

\subsubsection{Mapping Function $f_{mapping}^{(k,n)}$}
\label{subsubsec:mapping_function}
As mentioned above, to identify among which words to sample the obfuscation word, we are taken between two competing needs:
\begin{itemize}
    \item \textbf{Safety}: The non-contextualized representation of the obfuscation term needs to be far enough from the query term representation to make sure it is not a direct synonym/hypernym;
    \item \textbf{Effectiveness}: The non-contextualized representation of the obfuscation term needs to be sufficiently close to maintain a topical relation with the obfuscated term.  
\end{itemize}

To identify which words to consider as a viable candidate for obfuscating a word $w$, we consider three different similarity measures: the cosine similarity, the Euclidean similarity, and the product of the two. We refer to approaches instantiated using each measure respectively as \textit{angle obfuscation}, \textit{distance obfuscation}, and \textit{product obfuscation}. The selected distance functions aim to address the concerns raised in Section~\ref{subsec:Language-leakage}. On the other hand, the product of the two functions is recommended in order to balance their respective impact. Hence, for each word $w'$, we can define $s_{w, w'}=m(\phi(w), \phi(w'))$ the similarity between the original word $w$ using one of the aforementioned similarity measures. Then, the mechanism sorts the words based on the similarity results computed.

Once words have been obfuscated, we can define a function  $f^{(k,n)}_{\text{mapping}}$ that produces the candidate set $\mathcal{C}$ such that the candidate contains the $n$ closest words to the original word $w$, whose similarity is above the similarity of the $k$-th most similar term. By excluding the first top-$k$ words, we ensure that all the words are sufficiently different from the original ones. At the same time, the dimension of the candidate set, i.e., $n$, allows one to choose a set that contains many different words. One of the major advantages of this procedure is that the parameters can be tuned according to the user's privacy concern: a very concerned user can adopt a large $k$ and $n$: the query will be very different from the original one. Therefore, we can expect high privacy but, at the same time, possibly low retrieval performance. On the contrary, a not-concerned user can use a small $k$ (even 0, if they accept that a word can be obfuscated with itself) and $n$, achieving higher effectiveness but lower privacy. Furthermore, the presence of these bounding boxes to sample obfuscated words ensures that the $\varepsilon$ of the \ac{WBB} mechanism (described in the next section) cannot cause catastrophic data releases, in which the query is obfuscated with the query itself, achieving privacy only from a formal but not a practical point of view.


\subsubsection{Sampling Function $f_{sampling}^{(k,n,\varepsilon)}$}
The last phase consists of sampling the substitution words among the candidates present in $\mathcal{C}$ for each word to obfuscate, using the function $f^{(k,n,\varepsilon)}_{\text{sampling}}$. 

Such sampling relies on the exponential mechanism described in Section~\ref{sec:theoreticalBackground}. To instantiate the exponential mechanism, we need a utility function $u$. More in detail, given $s_{w,w'}$ the similarity score between words $w$ and $w'$, we compute for each $w'\in\mathcal{C}$ the $Z_{score}$ as $Z_{\text{score}}=\frac{s_{w,w'}-\mu}{\sigma}$, where $\mu$ is the average similarity over the words $w'\in\mathcal{C}$ and $\sigma$ the standard deviation of the similarities in the set $\mathcal{C}$ provided after the mapping step of the mechansim. 

Finally, we normalize such scores into the interval $\left[0,1\right]$, computing the final utilities using the function in Equation~\ref{eq:utility}.
The utility function is defined as follows:
\begin{equation}
    u(w,w') = \frac{1}{1+\exp\left(Z_{\text{score}}\right)}
    \label{eq:utility}
\end{equation}

We want to underline that our way of employing \ac{DP} is fundamentally different from the one currently adopted by other approaches addressing similar tasks, as described in Subsection \ref{subsec:DP-mechanisms}. Indeed, those approaches employ \ac{DP} to perturb the embedding vectors, but by setting the value of $\varepsilon$ to be sufficiently large, the noise becomes so small that the produced query is exactly the same as the original one. In our case, we use \ac{DP} to sample from a set of words that, by construction, are safer as the set does not contain the original word nor highly similar terms, according to their non-contextualized embeddings.

\subsubsection{Formal proof of \ac{DP} for \ac{WBB}}
\label{subsec:formalproof}
\begin{algorithm}[h!]
\DontPrintSemicolon
  \KwInput{Original query $Q=\langle w_{0}, \dots, w_{\ell}\rangle$, $\varepsilon>0$, $n>0$, $k>0$}
  \KwOutput{Obfuscated query $\tilde{Q}=\langle \tilde{w}_{0}, \dots, \tilde{w}_{\ell}\rangle$}
  \KwData{Vocabulary $V$, set of \ac{POS} tags $S$, function $m$}
  Define $\tilde{Q}=\langle\rangle$\;
  {{\footnotesize\ttfamily\textcolor{blue}{// Preprocessing}}}\;
  Tokenize the query $Q$, and \ac{POS} tag each word $w_{i}$\;
  \For{$w_{i}\in Q$} 
    {
    \If{\texttt{tag}$(w_{i})\in S$\;}
        {
            {{\footnotesize\ttfamily\textcolor{blue}{// Mapping Function: $f_{mapping}^{(k,n)}$}}}\;
            get $s_{i}=m(w_{i}, v)\ \forall v\in V$\;
            sort $s_{i}$ and get their position (\text{\texttt{pos}})\;
            \texttt{safe box} $=\{v\in V: \text{\texttt{pos}}(v)\leq k\}$\;
            $\mathcal{C}=\{v\in V: k <\text{\texttt{pos}}(v)\leq n+k\}$\;
            {{\footnotesize\ttfamily\textcolor{blue}{// Sampling Function: $f_{sampling}^{(k,n,\varepsilon)}$}}}\;
            get $Z_{\text{score}}\ \forall w \in \mathcal{C}$\;
            get $u_i\ \forall w \in \mathcal{C}$ according to Equation~\ref{eq:utility}\;
            $w_i = \tilde{w}_{i}$ where $f_{sampling}^{(k,n,\varepsilon)}(u_i)=\tilde{w}_{i}$ from $\mathcal{C}$\;
        }
    Append $w_{i}$ to $\tilde{Q}$\;
    }
    \Return $\tilde{Q}$\;
\caption{WBB Mechanism.}
\label{alg:mechanism}
\end{algorithm}

To assemble all the steps together, Algorithm~\ref{alg:mechanism} provides the \ac{WBB} pseudo-code. The proof of the $\varepsilon$-DP stems from its exponential mechanism.

\begin{theorem}
\label{theo:dp}
    The mechanism $\mathcal{M}$ explained in Algorithm \ref{alg:mechanism} is $\varepsilon$-Differentially Private.
\end{theorem}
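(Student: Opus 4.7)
The plan is to reduce the proof to the established $\varepsilon$-differential privacy guarantee of the exponential mechanism recalled in Section~\ref{sec:theoreticalBackground}. I would structure the argument by decomposing Algorithm~\ref{alg:mechanism} into its deterministic and its stochastic components, and then showing that the only source of randomness (the sampling step) is a faithful instance of the exponential mechanism with bounded-sensitivity utility.

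First, I would observe that the preprocessing block (tokenization, \ac{POS} tagging, embedding lookup through $\phi$) and the mapping function $f_{\text{mapping}}^{(k,n)}$ (computing similarities $s_i$, sorting them by position, excising the safe box, and producing the candidate set $\mathcal{C}$) are deterministic functions of the input word $w_i$. By the post-processing and pre-processing immunity properties of \ac{DP}, these stages do not consume any privacy budget: if the downstream sampling step is $\varepsilon$-\ac{DP}, the full pipeline from $w_i$ to $\tilde{w}_i$ inherits the same guarantee. In particular, the candidate set $\mathcal{C}$ plays the role of the range $\mathcal{R}$ of the exponential mechanism, and the Z-score normalization that converts similarity scores into utilities is also a deterministic mapping given $\mathcal{C}$.

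Second, I would identify $f_{\text{sampling}}^{(k,n,\varepsilon)}$ as an instance of the exponential mechanism with utility function $u(w,w')$ defined in Equation~\ref{eq:utility}. The sensitivity bound then follows directly from the codomain of $u$: since $u$ is a sigmoid composed with the Z-score, its image lies in $(0,1)$, so for any two neighbouring inputs $w,w''$ and any candidate $w'\in\mathcal{C}$ one has $|u(w,w') - u(w'',w')|\leq 1$, hence $\Delta u \leq 1$. Applying the standard privacy guarantee of the exponential mechanism, sampling $\tilde{w}_i$ with probability proportional to $\exp\bigl(\varepsilon\, u(w_i,\tilde{w}_i)/\Delta u\bigr)$ yields an $\varepsilon$-\ac{DP} release of $\tilde{w}_i$ from $w_i$. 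The safe box construction, which removes the top-$k$ most similar terms from the range, is a public, data-independent restriction of $\mathcal{R}$ and so does not alter the privacy analysis, though it is the source of the \emph{practical} privacy gains discussed in Section~\ref{subsec:Language-leakage}.

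The main obstacle I expect is the treatment of queries with more than one obfuscated token, since Algorithm~\ref{alg:mechanism} iterates the sampling over every word tagged in $S$. Two settings need to be disentangled: (i) if neighbouring queries are defined as differing in a single tokenized word, then parallel composition applies and the mechanism is $\varepsilon$-\ac{DP} at the word level; (ii) if neighbouring queries may differ in up to $\ell$ tokens, sequential composition would formally give $\ell\varepsilon$-\ac{DP}, and the theorem statement should be read under the metric-\ac{DP} convention adopted by the related literature \citep{feyisetan2020multivariate,xu2020mahalanobis,FaggioliFerro2024}, in which the per-token budget $\varepsilon$ is the privacy parameter of interest. I would make this convention explicit at the start of the proof and then conclude with the exponential-mechanism argument above, noting that the candidate set and utility depend only on the single token being obfuscated, so the per-token analysis carries over cleanly.
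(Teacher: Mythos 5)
Your proposal follows essentially the same route as the paper: bound the sensitivity of the utility in Equation~\ref{eq:utility} by $1$ using the fact that the sigmoid maps into $(0,1)$, then invoke the exponential mechanism's guarantee for the sampling step, treating the mapping function and candidate-set construction as deterministic pre-processing. One detail to fix: you state that sampling with probability proportional to $\exp\left(\varepsilon\, u(x,y)/\Delta u\right)$ yields $\varepsilon$-\ac{DP}, but the standard guarantee for that normalization is only $2\varepsilon$-\ac{DP}, because both the numerator and the normalizing constant can shift between neighbouring inputs; the paper's proof uses $\exp\left(\varepsilon\, u(x,y)/(2\Delta u)\right)$ and splits the resulting ratio into two factors of $e^{\varepsilon/2}$ to land exactly at $e^{\varepsilon}$. (The background section of the paper omits the factor of $2$, so the confusion is understandable, but the proof needs it.) Your explicit treatment of multi-token composition is a genuine addition: the paper's proof is purely per-token and is silent on whether the theorem should be read at the word level or would degrade to $\ell\varepsilon$ under sequential composition over a query of length $\ell$, so making that convention explicit strengthens rather than departs from the paper's argument.
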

\begin{proof}
    Let $\mathcal{X}$ be an input set that is mapped into the corresponding output set $\mathcal{Y}$ by the mechanism $\mathcal{M}$. For any pair input $x, x'\in \mathcal{X}$, identifying with $Z_{\text{score}}$ and $Z'_{\text{score}}$ the respective standardized initial scores, it holds that using the definition of utility function introduced in Equation~\ref{eq:utility}, the sensitivity $\Delta u$ is bounded by 1, since:
    \begin{equation}
        \begin{split}
        \Delta u &= \max_{y\in\mathcal{Y}}\max_{x,x'\in\mathcal{X}}|u(x,y)-u(x', y)|\\
&=\max_{y\in\mathcal{Y}}\max_{x,x'\in\mathcal{X}}\left|\frac{1}{1+\exp\left(Z_{\text{score}}\right)}-\frac{1}{1+\exp\left(Z'_{\text{score}}\right)}\right|\\
        &\leq1
        \end{split}
\end{equation}
    Hence, we can follow a reasoning similar to the proof of the exponential mechanism itself for concluding the proof~\citep{mcsherry2007exponential}. The sampling function of the Algorithm~\ref{alg:mechanism} provides the privacy property of the mechanism. Given a privacy budget $\varepsilon>0$, the probability of sampling an output $y$ given an input $x$ is given by the exponential mechanism, hence:
    \begin{equation}
        \text{P}\left[f_{\text{sampling}}(x)=y\right] = \frac{\exp\left(\frac{\varepsilon u(x,y)}{2\Delta u}\right)}{\sum_{\tilde{y}\in\mathcal{Y}}\exp\left(\frac{\varepsilon u(x,\tilde{y})}{2\Delta u}\right)}
    \end{equation}
    Finally, we conclude the proof with the following inequality, using the exponential mechanism employed in the sampling function:
    \begin{equation}
        \begin{split}
            & \frac{\text{P}\left[f_{\text{sampling}}(x)=y\right] }{\text{P}\left[f_{\text{sampling}}(x')=y\right] } = \frac{\frac{\exp\left(\frac{\varepsilon u(x,y)}{2\Delta u}\right)}{\sum_{\tilde{y}\in\mathcal{Y}}\exp\left(\frac{\varepsilon u(x,\tilde{y})}{2\Delta u}\right)}}{\frac{\exp\left(\frac{\varepsilon u(x',y)}{2\Delta u}\right)}{\sum_{\tilde{y}\in\mathcal{Y}}\exp\left(\frac{\varepsilon u(x',\tilde{y})}{2\Delta u}\right)}} \\
            &= \exp\left(\frac{\varepsilon \left(u(x,y)-u(x',y)\right)}{2\Delta u}\right)\frac{\sum_{\tilde{y}\in\mathcal{Y}}\exp\left(\frac{\varepsilon u(x',\tilde{y})}{2\Delta u}\right)}{{\sum_{\tilde{y}\in\mathcal{Y}}\exp\left(\frac{\varepsilon u(x,\tilde{y})}{2\Delta u}\right)}} \\
            &\leq \exp\left(\frac{\varepsilon \left(\Delta u\right)}{2\Delta u}\right)\frac{\sum_{\tilde{y}\in\mathcal{Y}}\exp\left(\frac{\varepsilon u(x',\tilde{y})}{2\Delta u}\right)}{{\sum_{\tilde{y}\in\mathcal{Y}}\exp\left(\frac{\varepsilon u(x,\tilde{y})}{2\Delta u}\right)}} \\
            &\leq \exp\left(\frac{\varepsilon}{2}\right) \exp\left(\frac{\varepsilon}{2}\right) \frac{\sum_{\tilde{y}\in\mathcal{Y}}\exp\left(\frac{\varepsilon u(x,\tilde{y})}{2\Delta u}\right)}{{\sum_{\tilde{y}\in\mathcal{Y}}\exp\left(\frac{\varepsilon u(x,\tilde{y})}{2\Delta u}\right)}} = e^\varepsilon
        \end{split}
    \end{equation}
\end{proof}
This terminates the formal bases of the proposed WBB method and proves the $\varepsilon$-Differential Privacy of the mechanism in Algorithm~\ref{alg:mechanism}.
\section{Experimental Analysis}
\label{sec:experimentalAnalysis}
This section provides an in-depth analysis and discussion of the results obtained through the experiments.

\subsection{Experimental Setup}
As encoder function $\phi$, we use the embeddings of GloVe~\citep{pennington2014glove} trained on the Common Crowd as vocabulary for our investigations. We experimented with vectors of size 300; other sizes have comparable behaviour. We test the obfuscation approach considering two different collections: TREC Deep Learning (DL'19)~\citep{msmarcodl19}, based on the MS MARCO~\citep{msmarco} passages corpus which contains 43 queries, and TREC Robust ‘04~\citep{robust04collection}, which relies on disks 4 and 5 of the TIPSTER corpus, minus congressional records, and contains 249 queries. Additionally, we consider four retrieval models, considering two sparse bag-of-word models, namely BM25~\citep{bm25} and Vector Space Model (TF-IDF)~\citep{tfidf}, and also two dense bi-encoders, TAS-B~\citep{hofstatter21tasb} and Contriever~\citep{izacard22contriever}. These models represent our non-cooperating \ac{IRS}. As mentioned in Section \ref{sec:methodology}, the results are then merged and re-ranked using Contriever, simulating a safe re-ranking on the user's side. For each obfuscation mechanism and collection, we generate 20 obfuscated queries to retrieve the top-$100$ documents. 

For reproducibility purposes, we release the code we used at \url{https://github.com/Kekkodf/WBB-QueryObfuscation}.

\subsubsection{Baseline Approaches}
To evaluate our mechanism, we compare performances and privacy with two different obfuscation mechanisms, namely the \ac{CMP}~\citep{feyisetan2020multivariate} and the \ac{MHL}~\citep{xu2020mahalanobis} mechanisms. 
As described in Section~\ref{subsec:DP-mechanisms}, \ac{CMP} and \ac{MHL} obfuscate each word independently and merge them in a single query. The obfuscation is achieved by adding some noise to the non-contextual embedding of each word and taking the closest word to the noisy embedding. In particular, \ac{CMP} samples the noise from a multivariate Laplace distribution. \ac{MHL}, on the other hand, uses the Mahalanobis norm to modify the direction in which the noise is sampled, such that it is more likely to sample denser areas of the embedding space to increase the chances that a word is not obfuscated by itself. We also compare our results with other state-of-the-art mechanisms which do not guarantee a formal $\varepsilon$-\ac{DP}, but that have been explicitly designed for \ac{IR}, i.e., the methods proposed by~\citet{arampatzis2011enenhcingdeniabilityquerylogs} (referred to as \aea) and~\citet{frobe2022efficentqueryobf} (\fea).

\subsection{Results}

\subsubsection{Privacy Guarantees} 
\label{subsec:privacyResults}
As detailed in Subsection~\ref{subsec:Privacy-metrics}, the metrics commonly employed to assess privacy levels are the \textit{Uncertainty Statistics}, \textit{Lexical Similarity}, and \textit{Semantic Similarity}. However, classical measures for the \ac{WBB} mechanism are not quite effective using the \textit{Uncertainty Statistics}: by mechanism definition, such uncertainty measures, i.e., the mechanism failure rate $N_{w} = \text{Pr}\left[\mathcal{M}(w)=w\right]$ and $S_{w} = \min\left|\left\{\mathcal{S}\subseteq \text{Image}(\mathcal{M}): \text{Pr}\left[\mathcal{M}(w)\notin \mathcal{S}\right]\leq\eta\right\}\right|$ for the number of words to which the mechanism obfuscate a term, will always be 0 as the \ac{WBB} mechanism preserves the confidentiality of a term and never obfuscates it with itself. Notice that implicitly $N_w$ describes the proportion of failures: ``How many times did the obfuscation mechanism not obfuscate, leaving the user exposed?", we argue that even a single time is too much. Similarly, $S_w$ is proportional to the \ac{WBB} parameters $k$ and $n$, and thus it is not informative. In the \ac{WBB} mechanism, the measure $S_w$ resembles more the $k$-anonymity property~\citep{Sweeney2002kanonimity}. Given a Candidate box $\mathcal{C}$ of size $n$, the probability of re-identification of a term without prior knowledge is equal to $\frac{1}{n}$, therefore enlarging the number of possible candidates $n$ the probability of guessing the correct term reduces to 0 even for small privacy budgets $\varepsilon$.

Thus, our privacy analysis focuses on the Lexical and Semantic Similarity measures to evaluate the quality of the obfuscated queries produced. To evaluate the lexical similarity, we used the Jaccard Similarity. For the semantic similarity, we employ as embedding function MiniLM~\citep{reimers2020multilingualsentencebert} using the L6 v2 version available on HuggingFace\footnote{\url{https://huggingface.co/sentence-transformers/all-MiniLM-L6-v2}} which maps the sentence to a 384-dimensional dense vector space and computes the cosine similarity between the original and the obfuscated encodings of the queries. To evaluate the effectiveness of the \ac{WBB} mechanism, we compared it against existing state-of-the-art mechanisms~\citep{feyisetan2020multivariate,xu2020mahalanobis}. We modified the different parameters of the \ac{WBB} mechanism, namely the distance function and the parameters $(k,n)$. The results are reported in Table~\ref{tab:CosineSimilarity}, where the lower the Jaccard and semantic similarities are, the higher the privacy provided.
\begin{table*}[tb]
\mycaption{Average Jaccard and Semantic Similarities of the obfuscated queries for the collection MSMARCO DL'19~\citep{msmarcodl19}. The analysis comprises the State-of-the-Art results, the WBB$(k,n)$ using different distance functions, i.e., \textbf{A} and \textbf{D} for the cosine and the Euclidean Distance, and \textbf{P} for the product between the two to create safe and candidate box, and the WBB with fixed sizes for safe $(\boldsymbol{k})$ and candidate $(\boldsymbol{n})$ box.}
\label{tab:CosineSimilarity}
\centering
\resizebox{\textwidth}{!}{
\begin{tabular}{@{}llccccccccccccccccccc@{}}
\toprule
 & \multicolumn{1}{c}{} & \multicolumn{8}{c}{\textit{\textbf{Jaccard Similarity}}} & & & \multicolumn{8}{c}{\textit{\textbf{Semantic Similarity}}} \\ 
 
 \midrule
 
 & \multicolumn{1}{c}{} & \multicolumn{8}{c}{$\varepsilon$} & & &\multicolumn{8}{c}{$\varepsilon$} \\ \midrule
 & \multicolumn{1}{c|}{\textbf{Mechansim}} & 1 & 5 & 10 & 12.5 & 15 & 17.5 & 20 & \multicolumn{1}{c}{50}& No DP & 1 & 5 & 10 & 12.5 & 15 & 17.5 & 20 & 50 & No DP\\

\midrule

\multicolumn{1}{l|}{\multirow{4}{*}{\textit{SotA}}}  & \multicolumn{1}{l|}{\aea} & - & - & - & - & - & - & - & - & 0.338 & - & - & - & - & - & - & - & - &  0.509\\
\multicolumn{1}{l|}{} & \multicolumn{1}{l|}{\fea} & - & - & - & - & - & - & - & - & 0.  & - & - & - & - & - & - & - & - & 0.077\\
\multicolumn{1}{l|}{} & \multicolumn{1}{l|}{CMP} & 0. & 0.003 & 0.100 & 0.277 & 0.509 & 0.707 & 0.814 & \multicolumn{1}{c}{0.935} & -  & 0.017 &   0.029 &   0.195 &   0.417 &   0.645 &  0.803 &  0.858 &  0.907 & - \\
\multicolumn{1}{l|}{} & \multicolumn{1}{l|}{Mhl} & 0. & 0.002 & 0.054 & 0.140 & 0.288 & 0.453 & 0.616 & \multicolumn{1}{c}{0.935} & - & 0.020 &   0.028 &   0.114 &   0.240 &   0.420 &  0.592 &  0.730 &  0.910 & -\\
 
 \midrule
 
\multicolumn{1}{l|}{\multirow{3}{*}{\textit{Dist. Meas.}}} & \multicolumn{1}{l|}{WBB$(4,50)$ \textbf{A}} & 0. & 0. & 0. & 0. & 0. & 0. & 0. & \multicolumn{1}{c}{0.} & -  & 0.425 & 0.425 & 0.430 & 0.430 & 0.432 & 0.427 & 0.436 & 0.428& - \\
\multicolumn{1}{l|}{} & \multicolumn{1}{l|}{WBB$(4,50)$ \textbf{D}} & 0. & 0. & 0. & 0. & 0. & 0. & 0. & \multicolumn{1}{c}{0.}& - & 0.366 & 0.370 & 0.367 & 0.366 & 0.361 & 0.368 & 0.368 & 0.366 & -\\
\multicolumn{1}{l|}{} & \multicolumn{1}{l|}{WBB$(4,50)$ \textbf{P}} & 0. & 0. & 0. & 0. & 0. & 0. & 0. & \multicolumn{1}{c}{0.}& - & 0.419 & 0.423 & 0.425 & 0.427 & 0.421 & 0.425 & 0.424 & 0.426 & -\\

\midrule

\multicolumn{1}{l|}{\multirow{5}{*}{\textit{$k$}}} & \multicolumn{1}{l|}{WBB$(\boldsymbol{2},50)$ A} & 0. & 0. & 0. & 0. & 0. & 0. & 0. & \multicolumn{1}{c}{0.}& -  & 0.439 & 0.441 & 0.436 & 0.442 & 0.439 & 0.431 & 0.436 & 0.441& - \\
\multicolumn{1}{l|}{} & \multicolumn{1}{l|}{WBB$(\boldsymbol{6},50)$ A} & 0. & 0. & 0. & 0. & 0. & 0. & 0. & \multicolumn{1}{c}{0.}& -& 0.425 & 0.425 & 0.430 & 0.430 & 0.432 & 0.427 & 0.436 & 0.428& - \\
\multicolumn{1}{l|}{} & \multicolumn{1}{l|}{WBB$(\boldsymbol{12},50)$ A} & 0. & 0. & 0. & 0. & 0. & 0. & 0. & \multicolumn{1}{c}{0.}& - & 0.413 & 0.413 & 0.410 & 0.410 & 0.412 & 0.410 & 0.414 & 0.416& - \\
\multicolumn{1}{l|}{} & \multicolumn{1}{l|}{WBB$(\boldsymbol{18},50)$ A} & 0. & 0. & 0. & 0. & 0. & 0. & 0. & \multicolumn{1}{c}{0.}& - & 0.404 & 0.408 & 0.405 & 0.406 & 0.404 & 0.403 & 0.405 & 0.407& - \\
\multicolumn{1}{l|}{} & \multicolumn{1}{l|}{WBB$(\boldsymbol{24},50)$ A} & 0. & 0. & 0. & 0. & 0. & 0. & 0. & \multicolumn{1}{c}{0.}& - & 0.394 & 0.406 & 0.398 & 0.401 & 0.394 & 0.393 & 0.394 & 0.404& - \\

\midrule

\multicolumn{1}{l|}{\multirow{5}{*}{\textit{$n$}}} & \multicolumn{1}{l|}{WBB$(4,\boldsymbol{5})$ A} & 0. & 0. & 0. & 0. & 0. & 0. & 0. & \multicolumn{1}{c}{0.} & -  & 0.495 & 0.495 & 0.499 & 0.493 & 0.490 & 0.493 & 0.498 & 0.495& - \\
\multicolumn{1}{l|}{} & \multicolumn{1}{l|}{WBB$(4,\boldsymbol{100})$ A} & 0. & 0. & 0. & 0. & 0. & 0. & 0. & \multicolumn{1}{c}{0.}& - & 0.406 & 0.410 & 0.408 & 0.408 & 0.409 & 0.412 & 0.406 & 0.410& - \\
    \multicolumn{1}{l|}{} & \multicolumn{1}{l|}{WBB$(4,\boldsymbol{150})$ A} & 0. & 0. & 0. & 0. & 0. & 0. & 0. & \multicolumn{1}{c}{0.}& - & 0.396 & 0.396 & 0.395 & 0.390 & 0.398 & 0.399 & 0.392 & 0.397& - \\
    \multicolumn{1}{l|}{} & \multicolumn{1}{l|}{WBB$(4,\boldsymbol{200})$ A} & 0. & 0. & 0. & 0. & 0. & 0. & 0. & \multicolumn{1}{c}{0.}& - & 0.396 & 0.387 & 0.390 & 0.387 & 0.390 & 0.388 & 0.389 & 0.382& - \\
    \multicolumn{1}{l|}{} & \multicolumn{1}{l|}{WBB$(4,\boldsymbol{250})$ A} & 0. & 0. & 0. & 0. & 0. & 0. & 0. & \multicolumn{1}{c}{0.}& - & 0.373 & 0.383 & 0.383 & 0.379 & 0.380 & 0.379 & 0.379 & 0.384& - \\

\bottomrule
\end{tabular}
}
\end{table*}
\paragraph{Discussion}
We observe that the Jaccard similarity for State-of-the-Art mechanisms is proportional to the privacy budget $\varepsilon$, suggesting that these mechanisms -- especially for large $\varepsilon$, struggle to replace the original words due to the insufficient noise added. Nonetheless, \ac{WBB} guarantees that words identified as sensitive and captured by the \ac{POS} tagger during preprocessing are always masked -- such words fall into the safe box and cannot be used for masking. Consequently, the intersection between original and obfuscated words invariably remains of size 0, ensuring that the Jaccard similarity is consistently 0.

Regarding the Semantic Similarity, we need a small step back to address one of the major limits of \ac{DP}: a notable critique levelled against $\varepsilon$-\ac{DP} is that as the privacy budget $\varepsilon$ increases, the privacy assurances offered by the mechanism vanish \citep{dwork2014privacybook,domingoferrer2021limitsdp}. Indeed, when the value of $\varepsilon$ increases, the State-of-the-Art mechanisms, i.e., \ac{CMP} and \ac{MHL}, struggle in the obfuscation process, suggesting that at high $\varepsilon$ values, only formal privacy is maintained. An important aspect handled by the \ac{WBB} mechanism through computing the Semantic Similarity values is that the privacy budget $\varepsilon$ does not impact the sentence similarity of the obfuscated queries. This characteristic overcomes the limitation of formal privacy because \ac{WBB} not only ensures the formal guarantee of the $\varepsilon$-\ac{DP} property, as proved in Section~\ref{subsec:formalproof}, but also provides an additional layer of concrete privacy to the original query. Specifically, for all instances analysed, the average cosine similarity computed is always below 0.5, meaning that the \ac{WBB} mechanism provides a true obfuscation of the original queries. Only in the case of a small value of $k$, i.e., the candidate box size, the cosine similarity approaches the 0.5 value.

Moreover, the most severe privacy strategies lowering the semantic similarity are the ones that use as distance function the Euclidean distance and the one that proposes a large size candidate box, i.e., \ac{WBB}$(4,50)$ \textbf{D} and \ac{WBB} $(4,250)$ \textbf{A}. This fact suggests that as we enlarge the candidate box, or on the other hand, we enforce the Euclidean Distance as a distance function to compute the mechanism boxes, more privacy is provided at the same level of formal privacy.

\subsubsection{Retrieve Relevant Documents}

\begin{table*}[tb]
\centering
\mycaption{Mean Recall is achieved by pooling the documents retrieved for each obfuscation query. The mechanism WBB$(k,n)$ has been evaluated by adopting as the similarity function the angle base distance (\textbf{A}).}
\resizebox{\textwidth}{!}{
\begin{tabular}{llcccccccccccccccccc}
\toprule
& & \multicolumn{9}{c}{\textbf{Robust '04}} & \multicolumn{9}{c}{\textbf{DL '19}} \\
\cmidrule(lr){3-11} \cmidrule(lr){12-20}
\textbf{IR model} & \textbf{Mechanism} & \multicolumn{9}{c}{\textbf{$\mathbf{\varepsilon}$}} & \multicolumn{9}{c}{\textbf{$\mathbf{\varepsilon}$}} \\
& & 1 & 5 & 10 &12.5& 15 &17.5& 20 & 50 & No-DP & 1 & 5 & 10 & 12.5 &15&17.5 & 20 & 50 & No-DP \\
\midrule
\multirow{9}{*}{\textbf{BM25}} 
    & No privacy & -     & -     & -     & -     & -     & -     & -   & -   & 0.410   & -   & -     & -     & -     & -     & -     & -   & -   & 0.454 \\
    \cline{2-20}
    & \aea & -     & -     & -     & -     & -     & -     & -   & -   & 0.420          & -    & -    & -    & -    & -    & -    & -  & -  & 0.445 \\
    & \fea& -     & -     & -     & -     & -     & -     & -   & -   & 0.140          & -    & -    & -    & -    & -    & -    & -  & -  & 0.231 \\
    & CMP     & 0.020 & 0.146 & 0.483 & 0.510 & 0.489 & 0.442 & 0.421 & 0.407 & -         & 0.011 & 0.135 & 0.384 & 0.534 & 0.517 & 0.514 & 0.480 & 0.444 & - \\
    & Mhl     & 0.032 & 0.089 & 0.398 & 0.500 & 0.512 & 0.501 & 0.466 & 0.407 & -         & 0.000 & 0.118 & 0.294 & 0.411 & 0.515 & 0.529 & 0.529 & 0.444 & -  \\
    \cline{2-20}
    & WBB (2,15) \textbf{A} & 0.119 & 0.131   & 0.141 & 0.143 & 0.141 & 0.147 & 0.139 & 0.145 & -    & 0.193 & 0.195 & 0.185 & 0.190 & 0.182 & 0.178 & 0.189 & 0.196 & -\\
    & WBB (2,20) \textbf{A} & 0.115 & 0.124   & 0.142 & 0.133 & 0.143 & 0.134 & 0.131 & 0.136 & -    & 0.187 & 0.195 & 0.176 & 0.186 & 0.188 & 0.177 & 0.188 & 0.193 & -\\
    & WBB (4,15) \textbf{A} & 0.119 & 0.131   & 0.137 & 0.138 & 0.139 & 0.132 & 0.141 & 0.144 & -    & 0.191 & 0.179 & 0.191 & 0.196 & 0.184 & 0.189 & 0.196 & 0.189 & -\\
    & WBB (4,20) \textbf{A} & 0.117   & 0.124 & 0.142 & 0.143 & 0.138 & 0.134 & 0.139 & 0.139 & -    & 0.172 & 0.180 & 0.187 & 0.179 & 0.179 & 0.175 & 0.178 & 0.185 & -\\
\midrule
\multirow{9}{*}{\textbf{TF-IDF}} 
    & No privacy & -     & -     & -     & -     & -     & -     & -   & -   & 0.411          & -    & -    & -    & -    & -    & -    & -  & -  & 0.451 \\
    \cline{2-20}
    & \aea& -     & -     & -     & -     & -     & -     & -   & -   & 0.420          & -    & -    & -    & -    & -    & -    & -  & -  & 0.443 \\
    & \fea& -     & -     & -     & -     & -     & -     & -   & -   & 0.139          & -    & -    & -    & -    & -    & -    & -  & -  & 0.231 \\
    & CMP     & 0.020 & 0.146 & 0.487 & 0.512 & 0.491 & 0.444 & 0.423 & 0.408 & -          & 0.011 & 0.135 & 0.386 & 0.535 & 0.515 & 0.515 & 0.478 & 0.442 & - \\
    & Mhl     & 0.032 & 0.089 & 0.398 & 0.504 & 0.516 & 0.504 & 0.468 & 0.408 & -          & 0.000 & 0.118 & 0.295 & 0.412 & 0.515 & 0.530 & 0.527 & 0.442 & - \\
    \cline{2-20}
    & WBB (2,15) \textbf{A} & 0.120 & 0.128 & 0.143 & 0.143 & 0.141 & 0.148 & 0.140 & 0.145 & -     & 0.190 & 0.195 & 0.184 & 0.190 & 0.181 & 0.178 & 0.188 & 0.194 & -\\
    & WBB (2,20) \textbf{A} & 0.114 & 0.126 & 0.140 & 0.136 & 0.143 & 0.134 & 0.132 & 0.137 & -     & 0.186 & 0.194 & 0.175 & 0.183 & 0.188 & 0.175 & 0.187 & 0.191 & -\\
    & WBB (4,15) \textbf{A} & 0.122 & 0.130 & 0.137 & 0.137 & 0.139 & 0.134 & 0.140 & 0.143 & -     & 0.190 & 0.180 & 0.190 & 0.194 & 0.183 & 0.190 & 0.197 & 0.188 & -\\
    & WBB (4,20) \textbf{A} & 0.118 & 0.124 & 0.142 & 0.144 & 0.137 & 0.134 & 0.136 & 0.136 & -     & 0.172 & 0.181 & 0.185 & 0.178 & 0.179 & 0.175 & 0.177 & 0.183 & -\\
\midrule
\multirow{9}{*}{\textbf{Contriever}} 
    & No privacy & -     & -     & -     & -     & -     & -     & -   & -   & 0.392   & -     & -     & -     & -     & -     & -     & -   & -   & 0.528 \\
    \cline{2-20}
    & \aea& -     & -     & -     & -     & -     & -     & -   & -   & 0.419          & -    & -    & -    & -    & -    & -    & -  & -  & 0.497 \\
    & \fea& -     & -     & -     & -     & -     & -     & -   & -   & 0.204          & -    & -    & -    & -    & -    & -    & -  & -  & 0.204 \\
    & CMP     & 0.034 & 0.106 & 0.469 & 0.507 & 0.481 & 0.433 & 0.406 & 0.392 & - & 0.000 & 0.077 & 0.446 & 0.615 & 0.644 & 0.628 & 0.576 & 0.512 & - \\
    & Mhl     & 0.026 & 0.088 & 0.345 & 0.473 & 0.503 & 0.497 & 0.460 & 0.392 & - & 0.000 & 0.057 & 0.264 & 0.476 & 0.601 & 0.641 & 0.650 & 0.512 & - \\
    \cline{2-20}
    & WBB (2,15) \textbf{A}     & 0.361   & 0.351   & 0.337   & 0.337   & 0.326   & 0.335   & 0.335 & 0.340 & -    & 0.347   & 0.341   & 0.352   & 0.349   & 0.347   & 0.352   & 0.345 & 0.351 & -\\
    & WBB (2,20) \textbf{A}    & 0.366   & 0.346   & 0.330   & 0.332   & 0.332   & 0.343   & 0.332 & 0.327 & -     & 0.380   & 0.386   & 0.384   & 0.397   & 0.394   & 0.393   & 0.393 & 0.380 & -\\
    & WBB (4,15) \textbf{A}    & 0.352   & 0.339   & 0.319   & 0.318   & 0.315   & 0.315   & 0.318 & 0.322 & -     & 0.342   & 0.352  & 0.365   & 0.351   & 0.346   & 0.348   & 0.348 & 0.362 & -\\
    & WBB (4,20) \textbf{A}    & 0.359   & 0.344   & 0.313   & 0.317   & 0.310   & 0.315   & 0.316 & 0.320 & -     & 0.240   & 0.270   & 0.364   & 0.270   & 0.238   & 0.289   & 0.289 & 0.258 & -\\
\midrule
\multirow{9}{*}{\textbf{Tas-B}} 
    & No privacy & -     & -     & -     & -     & -     & -     & -   & -   & 0.358   & -     & -     & -     & -     & -     & -     & -   & -   & 0.518 \\
    \cline{2-20}
    & \aea& -     & -     & -     & -     & -     & -     & -   & -   & 0.387          & -    & -    & -    & -    & -    & -    & -  & -  & 0.491 \\
    & \fea& -     & -     & -     & -     & -     & -     & -   & -   & 0.161          & -    & -    & -    & -    & -    & -    & -  & -  & 0.238 \\
    & CMP     & 0.027 & 0.080 & 0.434 & 0.477 & 0.444 & 0.398 & 0.369 & 0.356 & - & 0.000 & 0.063 & 0.392 & 0.615 & 0.636 & 0.622 & 0.575 & 0.498 & - \\
    & Mhl     & 0.028 & 0.064 & 0.310 & 0.438 & 0.464 & 0.460 & 0.423 & 0.356 & - & 0.000 & 0.042 & 0.275 & 0.455 & 0.584 & 0.645 & 0.638 & 0.499 & - \\
    \cline{2-20}
    & WBB (2,15) \textbf{A}     & 0.315   & 0.335   & 0.352   & 0.322   & 0.313   & 0.318   & 0.325 & 0.315 & -     & 0.322   & 0.321  & 0.323   & 0.325   & 0.325   & 0.333  & 0.332 & 0.338 & - \\
    & WBB (2,20) \textbf{A}    & 0.314   & 0.332   & 0.360   & 0.314   & 0.314   & 0.322   & 0.314 & 0.320 & -     & 0.347   & 0.356  & 0.356   & 0.349   & 0.353   & 0.359   & 0.359 & 0.352 & - \\
    & WBB (4,15) \textbf{A}    & 0.300   & 0.331   & 0.350   & 0.300   & 0.301   & 0.301   & 0.301 & 0.302 & -     & 0.312   & 0.318  & 0.327   & 0.323   & 0.328   & 0.327   & 0.326 & 0.324 & - \\
    & WBB (4,20) \textbf{A}    & 0.305   & 0.331   & 0.350   & 0.301   & 0.302   & 0.301   & 0.300 & 0.300 & -     & 0.218  & 0.244  & 0.334   & 0.239   & 0.231   & 0.220   & 0.264 & 0.215 & - \\
\bottomrule
\end{tabular}}
\label{tab:recall_results}
\end{table*}

As multiple obfuscation queries are sent to the \ac{IRS}, our first measure of interest is the pooled recall over all obfuscation queries, i.e., the recall considering the set of unique documents retrieved by any obfuscation query. This allows us to verify that the \ac{WBB} approach can bring enough relevant documents to the user's side. Table~\ref{tab:recall_results} reports the observed merged recall. Based on the results of the privacy evaluation, Subsection~\ref{subsec:privacyResults}, we report the \ac{WBB} retrieval capabilities based on the angle obfuscation.

There are two different patterns, depending on whether we assume the black box \ac{IRS} to be based on a lexical (BM25 and TF-IDF) or semantic \ac{IRS} (Contriever and Tas-B). In the first case, the recall of WBB is lower than those of other \ac{DP} mechanisms and \aea. At the same time, the recall of WBB is in line with the one of \fea\ on Robust `04, and lower on DL `19. 
If we switch to semantic \ac{IRS}, the recall remains overall lower compared to other \ac{DP} mechanisms and \aea, but the margin is far smaller. Nevertheless, when compared with \fea\ for semantic \ac{IRS}, the performance of WBB is higher.
The reason behind this behaviour is explained by the fact that lexical \ac{IRS} rely on exact matching. By completely preventing the usage of query terms on the obfuscated queries, both WBB and \fea\ suffer in terms of effectiveness if the \ac{IRS} is lexical. 
Vice versa, the phenomenon is less evident regarding semantic systems, as the WBB relies on the candidate box: terms that are sufficiently close in the latent embedding space to the original ones. Thus, we can expect some form of topical relation between real and obfuscation terms that allows semantic \ac{IRS} to be less affected by the privacy protection measures put in place by WBB.
When it comes to the behaviour of the different \ac{DP} approaches concerning the $\varepsilon$, we notice a completely different pattern between classical approaches based on introducing noise on the word embeddings (i.e., CMP and Mhl) and our solution based on sampling the words according to their similarity in the embedding space.
Both CMP and Mhl are severely affected by changes in $\varepsilon$.
In particular, for $\varepsilon=1$ and $\varepsilon=5$, depending on the scenario, the recall is on par with or lower than the one by WBB. Then, starting with $\varepsilon=10$ up to $\varepsilon=20$, we observe a recall even higher than the one achieved by the original \ac{IRS} with no privacy measures in place. The same pattern was also observed by~\citet{FaggioliFerro2024}. The behaviour is explained by the fact that with such big $\varepsilon$, several terms in the obfuscated query will be the same as the ones in the original query. In contrast, the changed ones are often synonyms and thus cause some sort of query expansion. Finally, if $\varepsilon=50$, the recall is almost identical to the one without privacy: with such a $\varepsilon$, \ac{DP} does not obfuscate at all, becoming simply a formal definition. 
Overall, WBB reacts to changes to $\varepsilon$, causing changes to recall in a quite bounded range. While this might be a limitation for more advanced users who wish to have the capacity to alter the system's behaviour completely, it might be an advantage for average users who are not at risk of ending up in catastrophic conditions. Finally, we changed the values of the parameters $(k,n)$ of the \ac{WBB} mechanism, and as suggested by \citet{FaggioliFerro2024}, we analyzed the average recall measured using as privacy budget $\varepsilon=10$. Figure~\ref{fig:confmatrix_Recall} reports the results obtained. 

\begin{figure*}[h!]
        \centering
        \includegraphics[width=\linewidth]{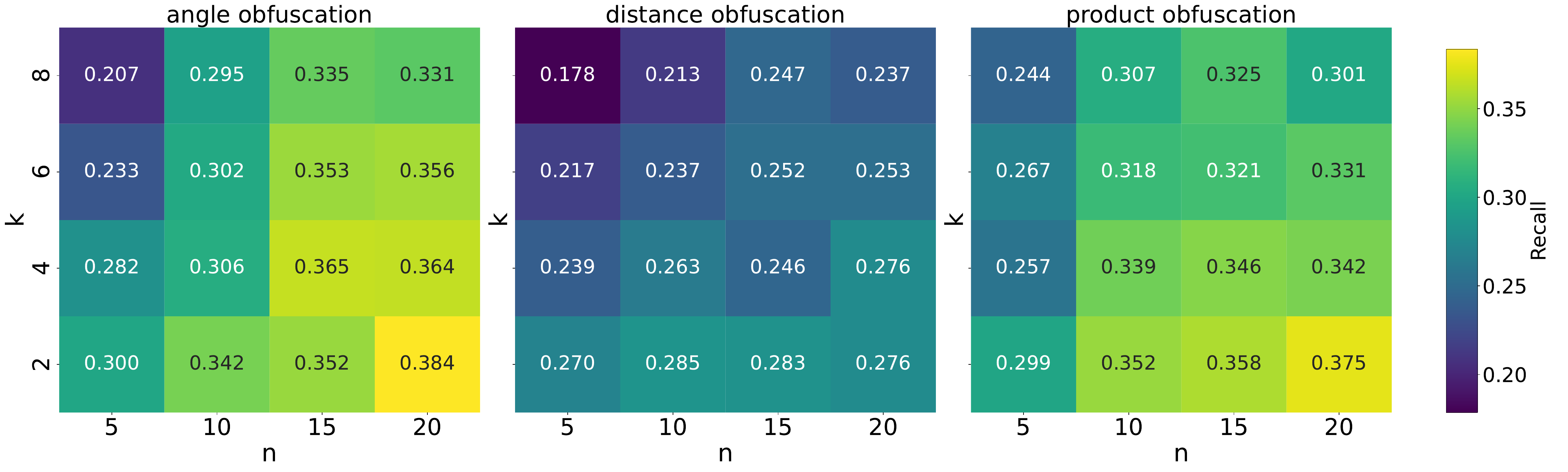} 
        \mycaption{Comparison matrices of the Recall for different obfuscations (angle, euclidean distance, and product), each at the same level of $\varepsilon$-\ac{DP} ($\varepsilon=10$), computed using Contriever and the DL'19 collection. With no privacy, Recall$=0.528$}
        \label{fig:confmatrix_Recall}
\end{figure*}
\paragraph{Discussion} Considering the different distance functions: as for the privacy analysis in Subsection~\ref{subsec:privacyResults}, the \ac{WBB} configurations that work the better are the angle and product-based obfuscation: the initial finding that we obtained when retrieving documents from the \ac{IRS} is that, stemming naturally from the privacy analysis conducted in Subsection~\ref{subsec:privacyResults}, the \ac{WBB} mechanism configuration most significantly affected in terms of recall is the one using the Euclidean distance function for computing safe and candidate box. In addition, another important trend that can be derived from Figure~\ref{fig:confmatrix_Recall} is that the lower the candidate box, the lower the recall: especially for the \ac{WBB}(8,5) all the computed recalls are below 0.25. This suggests that from a narrow candidate box and eliminating a high number of similar words, there are high privacy guarantees but low retrieval performance. 

On the other hand, a further consistent pattern across all the comparison matrices is that the performance tends to increase as a higher number of candidates is allowed, particularly when the safe box is not excessively wide. The higher results are obtained when \ac{WBB} is configured with only two words in the safe box, achieving a recall equal to 0.384 for \ac{WBB}(2,20) and 0.375 for \ac{WBB}(2,20), with a distance function angle and product based respectively.

\subsubsection{Determining the Utility Preserved}

\begin{table*}[tb]
\centering
\caption{Mean nDCG@10 is achieved by pooling the documents reranked for each obfuscation query. The mechanism WBB$(k,n)$ has been evaluated by adopting as the similarity function the angle base distance (\textbf{A}).}
\resizebox{\textwidth}{!}{
\begin{tabular}{llcccccccccccccccccc}
\toprule
& & \multicolumn{9}{c}{\textbf{Robust '04}} & \multicolumn{9}{c}{\textbf{DL '19}} \\
\cmidrule(lr){3-11} \cmidrule(lr){12-20}
\textbf{IR model} & \textbf{Mechanism} & \multicolumn{9}{c}{\textbf{$\mathbf{\varepsilon}$}} & \multicolumn{9}{c}{\textbf{$\mathbf{\varepsilon}$}} \\
& & 1 & 5 & 10 &12.5& 15 &17.5& 20 & 50 & No-DP & 1 & 5 & 10 & 12.5 &15&17.5 & 20 & 50 & No-DP \\
\midrule
\multirow{9}{*}{\textbf{BM25}} 
    & No privacy & -     & -     & -     & -     & -     & -     & -   & -   & 0.477   & -   & -     & -     & -     & -     & -     & -   & -   & 0.675 \\
    \cline{2-20}
    & \aea& -     & -     & -     & -     & -     & -     & -   & -   & 0.423        & -    & -    & -    & -    & -    & -    & -  & -  & 0.557 \\
    & \fea& -     & -     & -     & -     & -     & -     & -   & -   & 0.147        & -    & -    & -    & -    & -    & -    & -  & -  & 0.069 \\
    & CMP     & 0. & 0.002 & 0.091 & 0.226 & 0.338 & 0.402 & 0.415 & 0.421 & - & 0. & 0. & 0.053 & 0.164 & 0.293 & 0.363 & 0.394 & 0.403 & - \\
    & Mhl     & 0. & 0.001 & 0.037 & 0.106 & 0.204 & 0.303 & 0.369 & 0.421 & - & 0. & 0. & 0.016 & 0.067 & 0.154 & 0.253 & 0.329 & 0.403 & - \\
    \cline{2-20}
    & WBB (2,15) \textbf{A} & 0.099 & 0.106 & 0.115 & 0.111 & 0.125 & 0.116 & 0.105 & 0.111 & -    & 0.225 & 0.226 & 0.225 & 0.224 & 0.236 & 0.223 & 0.236 & 0.248 & -\\
    & WBB (2,20) \textbf{A} & 0.092 & 0.103 & 0.117 & 0.110 & 0.111 & 0.119 & 0.109 & 0.116 & -    & 0.230 & 0.230 & 0.215 & 0.222 & 0.243 & 0.203 & 0.241 & 0.236 & -\\
    & WBB (4,15) \textbf{A} & 0.092 & 0.106 & 0.104 & 0.100 & 0.117 & 0.112 & 0.119 & 0.111 & -    & 0.225 & 0.201 & 0.239 & 0.215 & 0.221 & 0.209 & 0.225 & 0.213 & -\\
    & WBB (4,20) \textbf{A} & 0.096 & 0.095 & 0.115 & 0.108 & 0.113 & 0.110 & 0.113 & 0.107 & -    & 0.189 & 0.198 & 0.209 & 0.198 & 0.202 & 0.186 & 0.191 & 0.221 & -\\
\midrule
\multirow{9}{*}{\textbf{TF-IDF}} 
    & No privacy & -     & -     & -     & -     & -     & -     & -   & -   & 0.477          & -    & -    & -    & -    & -    & -    & -  & -  & 0.675 \\
    \cline{2-20}
    & \aea& -     & -     & -     & -     & -     & -     & -   & -   & 0.421          & -    & -    & -    & -    & -    & -    & -  & -  & 0.557 \\
    & \fea& -     & -     & -     & -     & -     & -     & -   & -   & 0.148          & -    & -    & -    & -    & -    & -    & -  & -  & 0.068 \\
    & CMP     & 0. & 0.002 & 0.091 & 0.227 & 0.339 & 0.403 & 0.417 & 0.423 & - & 0. & 0. & 0.053 & 0.164 & 0.292 & 0.361 & 0.393 & 0.402 & - \\
    & Mhl     & 0. & 0.001 & 0.040 & 0.106 & 0.205 & 0.304 & 0.370 & 0.423 & - & 0. & 0. & 0.016 & 0.067 & 0.154 & 0.252 & 0.329 & 0.402 & - \\
    \cline{2-20}
    & WBB (2,15) \textbf{A} & 0.099 & 0.102 & 0.113 & 0.110 & 0.123 & 0.117 & 0.106 & 0.110 & -    & 0.224 & 0.227 & 0.223 & 0.223 & 0.236 & 0.223 & 0.235 & 0.249 & -\\
    & WBB (2,20) \textbf{A} & 0.093 & 0.103 & 0.117 & 0.108 & 0.112 & 0.119 & 0.113 & 0.114 & -    & 0.232 & 0.228 & 0.216 & 0.223 & 0.244 & 0.201 & 0.241 & 0.236 & -\\
    & WBB (4,15) \textbf{A} & 0.094 & 0.102 & 0.106 & 0.100 & 0.117 & 0.115 & 0.116 & 0.111 & -    & 0.223 & 0.200 & 0.238 & 0.215 & 0.223 & 0.209 & 0.225 & 0.213 & -\\
    & WBB (4,20) \textbf{A} & 0.097 & 0.095 & 0.114 & 0.109 & 0.114 & 0.111 & 0.114 & 0.106 & -    & 0.190 & 0.200 & 0.207 & 0.197 & 0.202 & 0.188 & 0.191 & 0.222 & -\\
\midrule
\multirow{9}{*}{\textbf{Contriever}} 
    & No privacy & -     & -     & -     & -     & -     & -     & -   & -   & 0.466   & -     & -     & -     & -     & -     & -     & -   & -   & 0.676 \\
    \cline{2-20}
    & \aea& -     & -     & -     & -     & -     & -     & -   & -   & 0.430          & -    & -    & -    & -    & -    & -    & -  & -  & 0.567 \\
    & \fea& -     & -     & -     & -     & -     & -     & -   & -   & 0.200          & -    & -    & -    & -    & -    & -    & -  & -  & 0.056 \\
    & CMP     & 0. & 0.002 & 0.093 & 0.239 & 0.373 & 0.440 & 0.459 & 0.466 & - & 0. & 0. & 0.041 & 0.189 & 0.397 & 0.522 & 0.572 & 0.598 & - \\
    & Mhl     & 0. & 0.001 & 0.036 & 0.104 & 0.211 & 0.323 & 0.402 & 0.466 & - & 0. & 0. & 0.014 & 0.064 & 0.178 & 0.330 & 0.470 & 0.597 & - \\
    \cline{2-20}
    & WBB (2,15) \textbf{A}     & 0.457   & 0.454   & 0.447   & 0.445   & 0.444   & 0.452   & 0.442 & 0.448 & -    & 0.564   & 0.577   & 0.599   & 0.582   & 0.607   & 0.584   & 0.576 & 0.600 & -\\
    & WBB (2,20) \textbf{A}    & 0.460   & 0.451   & 0.444   & 0.444   & 0.449   & 0.450   & 0.445 & 0.444 & -     & 0.597   & 0.623   & 0.594   & 0.613   & 0.604   & 0.626   & 0.623 & 0.603 & -\\
    & WBB (4,15) \textbf{A}    & 0.450   & 0.447   & 0.431   & 0.435   & 0.432   & 0.432   & 0.433 & 0.442 & -     & 0.599   & 0.616  & 0.611   & 0.588   & 0.606   & 0.612   & 0.589 & 0.611 & -\\
    & WBB (4,20) \textbf{A}    & 0.457   & 0.447   & 0.430   & 0.435   & 0.423   & 0.439   & 0.433 & 0.432 & -     & 0.453   & 0.502   & 0.584   & 0.527   & 0.493   & 0.467   & 0.493 & 0.497 & -\\
\midrule
\multirow{9}{*}{\textbf{Tas-B}} 
    & No privacy & -     & -     & -     & -     & -     & -     & -   & -   & 0.469   & -     & -     & -     & -     & -     & -     & -   & -   & 0.674 \\
    \cline{2-20}
    & \aea& -     & -     & -     & -     & -     & -     & -   & -   & 0.431          & -    & -    & -    & -    & -    & -    & -  & -  & 0.579 \\
    & \fea& -     & -     & -     & -     & -     & -     & -   & -   & 0.168          & -    & -    & -    & -    & -    & -    & -  & -  & 0.052 \\
    & CMP     & 0. & 0.001 & 0.085 & 0.225 & 0.355 & 0.423 & 0.439 & 0.446 & - & 0. & 0. & 0.048 & 0.195 & 0.415 & 0.552 & 0.615 & 0.649 & - \\
    & Mhl     & 0. & 0.001 & 0.032 & 0.097 & 0.196 & 0.305 & 0.384 & 0.446 & - & 0. & 0. & 0.015 & 0.069 & 0.180 & 0.340 & 0.488 & 0.648 & - \\
    \cline{2-20}
    & WBB (2,15) \textbf{A}     & 0.457   & 0.448   & 0.440   & 0.443   & 0.441   & 0.445   & 0.442 & 0.442 & -     & 0.573   & 0.580  & 0.590   & 0.623   & 0.585   & 0.598  & 0.589 & 0.595 & - \\
    & WBB (2,20) \textbf{A}    & 0.460   & 0.449   & 0.447   & 0.446   & 0.444   & 0.446   & 0.443 & 0.442 & -     & 0.622   & 0.618  & 0.622   & 0.625   & 0.616   & 0.630   & 0.622 & 0.621 & - \\
    & WBB (4,15) \textbf{A}    & 0.454   & 0.451   & 0.436   & 0.438   & 0.432   & 0.431   & 0.430 & 0.442 & -     & 0.616   & 0.593  & 0.606   & 0.607   & 0.608   & 0.604   & 0.609 & 0.594 & - \\
    & WBB (4,20) \textbf{A}    & 0.454   & 0.443   & 0.428   & 0.436   & 0.427   & 0.427   & 0.425 & 0.429 & -     & 0.467  & 0.513  & 0.613   & 0.522   & 0.486   & 0.482   & 0.460 & 0.496 & - \\
\bottomrule
\end{tabular}}
\label{tab:nDCG_results}
\end{table*}
As aforementioned, once the documents are available on the user's side, they need to be re-ranked using the real information needs. To implement this, we use Contriever as the reranker.
Table~\ref{tab:nDCG_results} reports the nDCG@10 values after re-ranking the documents retrieved through obfuscation queries by each \ac{IRS}. 
We report the results for $\varepsilon=10$, as WBB is not particularly affected by it, and it is one of the recommended values according to~\citet{FaggioliFerro2024}.
In line with what was observed for Table~\ref{tab:recall_results}, WBB appears to struggle if the underneath \ac{IRS} relies on exact matching, for the reasons previously mentioned.
Nevertheless, it is also important to mention that the obfuscation queries generated by the best-performing methods for Robust `04, \ac{CMP} and \ac{MHL} have a Jaccard similarity with the original query, respectively of 0.225 and 0.101: one out 5 and one out of 10 words, respectively are left unchanged. As the original queries have, on average, 5.8 words for CMP, in expectation, slightly less than one term is left exactly as it is in every query. This means that, by sending 20 queries to the \ac{IRS}, we can expect that each real term will appear 4 times across the set of obfuscated queries. The same reasoning applies straightforwardly to \aea, whose generated obfuscation queries have a Jaccard similarity of 0.200. 
We argue that, by showing the real terms of the query to the \ac{IRS} multiple times scattered across obfuscation queries, it is easy to recognize them.
Vice-versa, both WBB and \fea suffer in terms of performance when the \ac{IRS} uses lexical matching, but they are guaranteed to not show the real query terms, nor excessively related terms.
If we compare WBB and \fea, \fea outperforms WBB on Robust `04 with BM25 and TF-IDF by 0.030 and 0.031 nDCG points respectively. Vice versa, on DL `19, WBB outperforms \fea by 0.170 nDCG points on both BM25 and TF-IDF.  
The pattern completely changes when we move to the semantic \ac{IRS}. While the privacy guarantees remain the same, i.e., the obfuscation queries are always the same, the obfuscation terms used by WBB are semantically related in a latent space, thanks to the usage of the candidate box. 
As a consequence, when semantic \ac{IRS} are in place, the more secure WBB obfuscation queries are also able to provide high performance, surpassed only by CMP on Robust `04 by a small margin (0.016 nDCG points), while being the most effective solution on DL `19.

\paragraph{Discussion}
\begin{figure*}[h!]
        \centering
        \includegraphics[width=\linewidth]{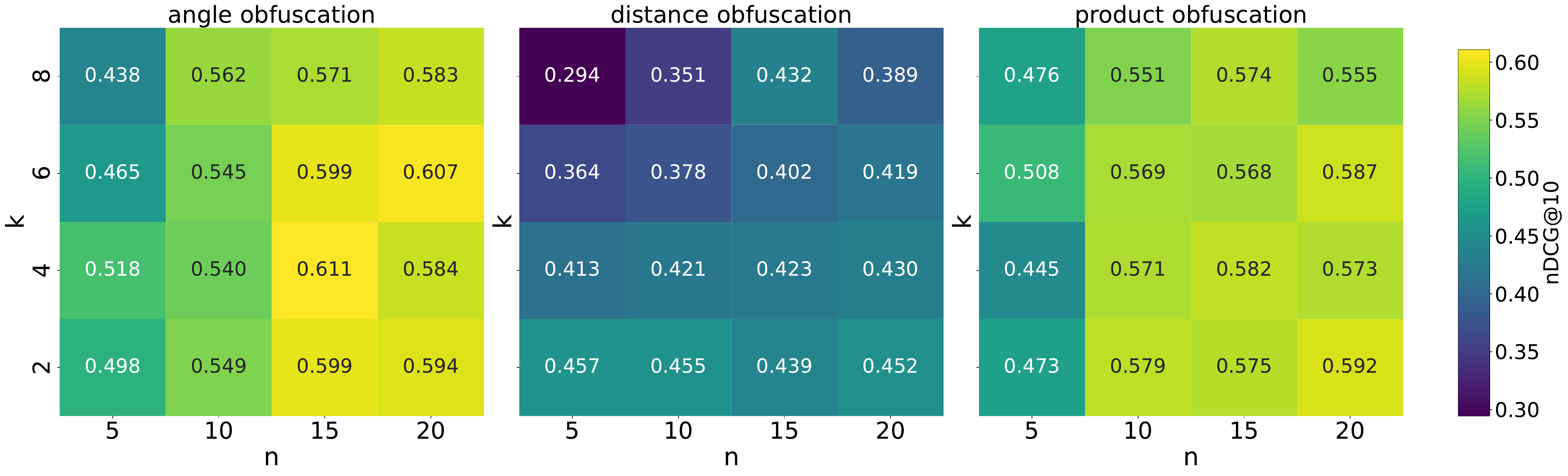} 
        \mycaption{Comparison matrices of nDCG@10 for different obfuscations (angle, euclidean distance, and product), each at the same level of $\varepsilon$-\ac{DP} ($\varepsilon=10$), computing using Contriever and the DL'19 collection. With no privacy, nDCG@10$=0.676$.}
        \label{fig:confmatrix}
\end{figure*}

As mentioned in Section~\ref{sec:methodology}, WBB can be instantiated using multiple similarity functions to operationalize the exponential mechanism. Furthermore, it relies on two parameters: $k$, the size of the size box, and $n$, the size of the candidate box. Therefore, we propose here an ablation study of such parameters.
Figure~\ref{fig:confmatrix} reports the nDCG values when varying the parameters if Contriever is used both as \ac{IRS} and re-ranker for DL `19. Similar results can be observed in other scenarios and thus are not reported. A lighter green indicates better performance, while a darker blue corresponds to worse performance. Interestingly, the obfuscation using the Euclidean distance obtains the worst results. Most likely, this form of obfuscation tends to be particularly severe: while we have a decrease in performance, we also have an increase in privacy as supported by the results in Table~\ref{tab:CosineSimilarity} using \textbf{D} as a distance function. Therefore, this obfuscation is suggested for particularly privacy-concerned users.

On the contrary, angle obfuscation, based on the cosine similarity between the real and the candidate obfuscation words, is the most effective solution and is suitable for less privacy-concerned individuals. Finally, the product between the two has an intermediate behaviour suited for intermediate users.
For what concerns $k$, the size of the safebox, as a general trend, the bigger it is, the lower the performance. This behaviour is reasonable, considering that we increase privacy by excluding more and more terms similar to the query terms. Finally, the size of the candidate box has a positive impact on the performance. If the candidate box is too small, then the \ac{DP} mechanism will always sample across the same set of terms. If none of such terms is sufficiently topically related, then the final performance will be low. Vice-versa, a larger candidate box ensures the variability of the sampling, further increasing privacy, but also the likelihood that a sufficiently topically related term will be available.

\section{Related Works}
\label{sec:relatedWorks}
\paragraph{\ac{IR} and Privacy}
The problem of Privacy is not new to \ac{IR}~\citep{domingoferrer2008surveyprivacydatamining,arampatzis2013versatiletoolprivacywebsearch,wasi2018queryobfuscationforprivacyinpersonalizedwebsearch}. To address such a problem, the \ac{IR} community has approached the problem of obfuscating queries in three different ways: i) methods based on dummy queries; ii) methods based on unlinkability; iii) methods based on proxy queries. One aspect of the dummy queries approach~\citep{elovici2002privacymodelforwebsurf,DomingoFerrer2009HkprivateIR} involves sending, alongside the original query to the \ac{IRS}, a collection of unrelated queries that are syntactically similar to the user's one. On the other hand, the unlinkability approach~\citep{domingoferrer2009PIRinpeer2peer,castellaroca2009userprivacyinse} depends on cryptographic and Private Information Retrieval methods to enable a community of users to share queries among themselves, ensuring that each user submits the query of others and the system cannot create a specific user profile. The approach based on proxy queries~\citep{frobe2022efficentqueryobf,arampatzis2015versativescramblerprivatewebsearch} entails the decomposition of the original query into several non-sensitive queries, the combination of which may potentially yield an answer to the user's information need, e.g., the query ``throat cancer" is transformed into ``neck" and ``tumour", reducing the information disclosure of each query. The drawback of dummy and unlinkability approaches is that the original query is sent to the \ac{IRS}, hence increasing the vulnerability toward machine learning-based attacks~\citep{peddinti2011anonymizingnetworksforwebse, peddinti2014websearchqueryprivacy}. In addition, the unlinkability approach poses another potential threat to the user: it moves the problem from the \ac{IRS} to a different user in the federation. These risks do not occur for the approach based on the proxy queries. However, dummy queries physiologically decrease the effectiveness of the retrieval, which does not occur for the dummy and unlinkability query approaches. Regardless, we argue that the user willingly renounces part of the retrieval capability in favour of achieving strong privacy protection. Therefore, we focus on analyzing obfuscation techniques that rely on proxy queries and exploring various methods, including those based on \ac{DP} and other approaches.
\section{Conclusions and Future Works}
\label{sec:conclusionsAndFutureWorks}
In this study, we have presented \acf{WBB}, a new method based on $\varepsilon$-\ac{DP} for query obfuscation in \ac{IR} task. Our approach involves removing the top-$k$ most similar words, depending on a scoring function $m$, and sampling the substitutes from a list of potential $n$ candidates to disguise other possible relevant terms employing the exponential mechanism. In this study, we evaluated various mechanisms in the field of \ac{IR} by analyzing the Recall and nDCG using two TREC Collections and multiple retrieval models, sparse bag-of-word, i.e., BM25 and TF-IDF, and dense bi-encoders, i.e., Contriever and TAS-B. We compared \ac{WBB} with other query obfuscation mechanisms, the ones proposed by \citet{arampatzis2011enenhcingdeniabilityquerylogs} and \citet{frobe2022efficentqueryobf}, and other mechanisms designed for text obfuscation in \ac{NLP} based on \ac{DP}, namely the \acf{CMP}~\citep{feyisetan2020multivariate} and the \acf{MHL}~\citep{xu2020mahalanobis}. To assess the trade-off between privacy and utility of the $20$ obfuscated queries generated by the examined mechanisms, we considered different $\varepsilon$ parametrizations. Our findings highlight that the \ac{WBB} mechanism achieves an average nDCG score of 0.622 on the DeepLearning'19 collection, approaching the score obtained using the original queries. In addition, our study delved into the influence between the parametrization of the \ac{WBB} mechanism and the values of nDCG. Moreover, the privacy analysis showed that the \ac{WBB} overcomes the limitation of only formal privacy, ensuring that, even for high privacy budgets $\varepsilon$, the Lexical and Semantic Similarity of the obfuscated queries are always below the ones provided by State-of-the-Art mechanisms. In future directions, we plan to explore how to construct obfuscated queries by studying new scoring functions $m$ and the best parameter selection of the values that identify the top-$k$ and the $n$ candidates. Furthermore, we intend to focus the study on analyzing ways to enhance the effectiveness of obfuscated queries by exploring the perturbation of contextualized dense representations. Finally, we plan to explore potential inference attacks against the \ac{WBB} mechanism. 




\bibliographystyle{elsarticle-harv}
\bibliography{references}

\begin{thebibliography}{61}
\expandafter\ifx\csname natexlab\endcsname\relax\def\natexlab#1{#1}\fi
\providecommand{\url}[1]{\texttt{#1}}
\providecommand{\href}[2]{#2}
\providecommand{\path}[1]{#1}
\providecommand{\DOIprefix}{doi:}
\providecommand{\ArXivprefix}{arXiv:}
\providecommand{\URLprefix}{URL: }
\providecommand{\Pubmedprefix}{pmid:}
\providecommand{\doi}[1]{\href{http://dx.doi.org/#1}{\path{#1}}}
\providecommand{\Pubmed}[1]{\href{pmid:#1}{\path{#1}}}
\providecommand{\bibinfo}[2]{#2}
\ifx\xfnm\relax \def\xfnm[#1]{\unskip,\space#1}\fi
\bibitem[{Ahmad et~al.(2018)Ahmad, Chang and Wang}]{wasi2018queryobfuscationforprivacyinpersonalizedwebsearch}
\bibinfo{author}{Ahmad, W.U.}, \bibinfo{author}{Chang, K.W.}, \bibinfo{author}{Wang, H.}, \bibinfo{year}{2018}.
\newblock \bibinfo{title}{Intent-aware query obfuscation for privacy protection in personalized web search}, in: \bibinfo{booktitle}{The 41st International ACM SIGIR Conference on Research \& Development in Information Retrieval}, \bibinfo{publisher}{Association for Computing Machinery}, \bibinfo{address}{New York, NY, USA}. p. \bibinfo{pages}{285–294}.
\newblock \URLprefix \url{https://doi.org/10.1145/3209978.3209983}, \DOIprefix\doi{10.1145/3209978.3209983}.
\bibitem[{Allshouse et~al.(2010)Allshouse, Fitch, Hampton, Gesink, Doherty, Leone, Serre and Miller}]{allshouse2010geomasking}
\bibinfo{author}{Allshouse, W.B.}, \bibinfo{author}{Fitch, M.K.}, \bibinfo{author}{Hampton, K.H.}, \bibinfo{author}{Gesink, D.C.}, \bibinfo{author}{Doherty, I.A.}, \bibinfo{author}{Leone, P.A.}, \bibinfo{author}{Serre, M.L.}, \bibinfo{author}{Miller, W.C.}, \bibinfo{year}{2010}.
\newblock \bibinfo{title}{Geomasking sensitive health data and privacy protection: an evaluation using an e911 database}.
\newblock \bibinfo{journal}{Geocarto international} \bibinfo{volume}{25}, \bibinfo{pages}{443--452}.
\bibitem[{Arampatzis et~al.(2013a)Arampatzis, Drosatos and Efraimidis}]{arampatzis2013versatiletoolprivacywebsearch}
\bibinfo{author}{Arampatzis, A.}, \bibinfo{author}{Drosatos, G.}, \bibinfo{author}{Efraimidis, P.S.}, \bibinfo{year}{2013}a.
\newblock \bibinfo{title}{A versatile tool for privacy-enhanced web search}, in: \bibinfo{editor}{Serdyukov, P.}, \bibinfo{editor}{Braslavski, P.}, \bibinfo{editor}{Kuznetsov, S.O.}, \bibinfo{editor}{Kamps, J.}, \bibinfo{editor}{R{\"{u}}ger, S.M.}, \bibinfo{editor}{Agichtein, E.}, \bibinfo{editor}{Segalovich, I.}, \bibinfo{editor}{Yilmaz, E.} (Eds.), \bibinfo{booktitle}{Advances in Information Retrieval - 35th European Conference on {IR} Research, {ECIR} 2013, Moscow, Russia, March 24-27, 2013. Proceedings}, \bibinfo{publisher}{Springer}. pp. \bibinfo{pages}{368--379}.
\newblock \URLprefix \url{https://doi.org/10.1007/978-3-642-36973-5\_31}, \DOIprefix\doi{10.1007/978-3-642-36973-5\_31}.
\bibitem[{Arampatzis et~al.(2015)Arampatzis, Drosatos and Efraimidis}]{arampatzis2015versativescramblerprivatewebsearch}
\bibinfo{author}{Arampatzis, A.}, \bibinfo{author}{Drosatos, G.}, \bibinfo{author}{Efraimidis, P.S.}, \bibinfo{year}{2015}.
\newblock \bibinfo{title}{Versatile query scrambling for private web search}.
\newblock \bibinfo{journal}{Inf. Retr. J.} \bibinfo{volume}{18}, \bibinfo{pages}{331--358}.
\newblock \URLprefix \url{https://doi.org/10.1007/s10791-015-9256-0}, \DOIprefix\doi{10.1007/S10791-015-9256-0}.
\bibitem[{Arampatzis et~al.(2011)Arampatzis, Efraimidis and Drosatos}]{arampatzis2011enenhcingdeniabilityquerylogs}
\bibinfo{author}{Arampatzis, A.}, \bibinfo{author}{Efraimidis, P.S.}, \bibinfo{author}{Drosatos, G.}, \bibinfo{year}{2011}.
\newblock \bibinfo{title}{Enhancing deniability against query-logs}, in: \bibinfo{editor}{Clough, P.D.}, \bibinfo{editor}{Foley, C.}, \bibinfo{editor}{Gurrin, C.}, \bibinfo{editor}{Jones, G.J.F.}, \bibinfo{editor}{Kraaij, W.}, \bibinfo{editor}{Lee, H.}, \bibinfo{editor}{Murdock, V.} (Eds.), \bibinfo{booktitle}{Advances in Information Retrieval - 33rd European Conference on {IR} Research, {ECIR} 2011, Dublin, Ireland, April 18-21, 2011. Proceedings}, \bibinfo{publisher}{Springer}. pp. \bibinfo{pages}{117--128}.
\newblock \URLprefix \url{https://doi.org/10.1007/978-3-642-20161-5\_13}, \DOIprefix\doi{10.1007/978-3-642-20161-5\_13}.
\bibitem[{Arampatzis et~al.(2013b)Arampatzis, Efraimidis and Drosatos}]{arampatzis2013queriesscrambler}
\bibinfo{author}{Arampatzis, A.}, \bibinfo{author}{Efraimidis, P.S.}, \bibinfo{author}{Drosatos, G.}, \bibinfo{year}{2013}b.
\newblock \bibinfo{title}{A query scrambler for search privacy on the internet}.
\newblock \bibinfo{journal}{Inf. Retr.} \bibinfo{volume}{16}, \bibinfo{pages}{657--679}.
\newblock \URLprefix \url{https://doi.org/10.1007/s10791-012-9212-1}, \DOIprefix\doi{10.1007/S10791-012-9212-1}.
\bibitem[{Bird and Loper(2004)}]{bird2004nltk}
\bibinfo{author}{Bird, S.}, \bibinfo{author}{Loper, E.}, \bibinfo{year}{2004}.
\newblock \bibinfo{title}{{NLTK}: The natural language toolkit}, in: \bibinfo{booktitle}{Proceedings of the {ACL} Interactive Poster and Demonstration Sessions}, \bibinfo{publisher}{Association for Computational Linguistics}, \bibinfo{address}{Barcelona, Spain}. pp. \bibinfo{pages}{214--217}.
\newblock \URLprefix \url{https://aclanthology.org/P04-3031}.
\bibitem[{Carvalho et~al.(2023)Carvalho, Vasiloudis, Feyisetan and Wang}]{carvalho2023tem}
\bibinfo{author}{Carvalho, R.S.}, \bibinfo{author}{Vasiloudis, T.}, \bibinfo{author}{Feyisetan, O.}, \bibinfo{author}{Wang, K.}, \bibinfo{year}{2023}.
\newblock \bibinfo{title}{{TEM:} high utility metric differential privacy on text}, in: \bibinfo{editor}{Shekhar, S.}, \bibinfo{editor}{Zhou, Z.}, \bibinfo{editor}{Chiang, Y.}, \bibinfo{editor}{Stiglic, G.} (Eds.), \bibinfo{booktitle}{Proceedings of the 2023 {SIAM} International Conference on Data Mining, {SDM} 2023, Minneapolis-St. Paul Twin Cities, MN, USA, April 27-29, 2023}, \bibinfo{publisher}{{SIAM}}. pp. \bibinfo{pages}{883--890}.
\newblock \URLprefix \url{https://doi.org/10.1137/1.9781611977653.ch99}, \DOIprefix\doi{10.1137/1.9781611977653.CH99}.
\bibitem[{Castellà-Roca et~al.(2009)Castellà-Roca, Viejo and Herrera-Joancomartí}]{castellaroca2009userprivacyinse}
\bibinfo{author}{Castellà-Roca, J.}, \bibinfo{author}{Viejo, A.}, \bibinfo{author}{Herrera-Joancomartí, J.}, \bibinfo{year}{2009}.
\newblock \bibinfo{title}{Preserving user’s privacy in web search engines}.
\newblock \bibinfo{journal}{Computer Communications} \bibinfo{volume}{32}, \bibinfo{pages}{1541--1551}.
\newblock \URLprefix \url{https://www.sciencedirect.com/science/article/pii/S014036640900125X}, \DOIprefix\doi{https://doi.org/10.1016/j.comcom.2009.05.009}.
\bibitem[{Chatzikokolakis et~al.(2013)Chatzikokolakis, Andr{\'e}s, Bordenabe and Palamidessi}]{Chatzikokolakis2013broadening}
\bibinfo{author}{Chatzikokolakis, K.}, \bibinfo{author}{Andr{\'e}s, M.E.}, \bibinfo{author}{Bordenabe, N.E.}, \bibinfo{author}{Palamidessi, C.}, \bibinfo{year}{2013}.
\newblock \bibinfo{title}{Broadening the scope of differential privacy using metrics}, in: \bibinfo{editor}{De~Cristofaro, E.}, \bibinfo{editor}{Wright, M.} (Eds.), \bibinfo{booktitle}{Privacy Enhancing Technologies}, \bibinfo{publisher}{Springer Berlin Heidelberg}, \bibinfo{address}{Berlin, Heidelberg}. pp. \bibinfo{pages}{82--102}.
\bibitem[{Chen et~al.(2023)Chen, Mo, Wang, Chen, Nie, Wang and Cui}]{chen2023customized}
\bibinfo{author}{Chen, S.}, \bibinfo{author}{Mo, F.}, \bibinfo{author}{Wang, Y.}, \bibinfo{author}{Chen, C.}, \bibinfo{author}{Nie, J.Y.}, \bibinfo{author}{Wang, C.}, \bibinfo{author}{Cui, J.}, \bibinfo{year}{2023}.
\newblock \bibinfo{title}{A customized text sanitization mechanism with differential privacy}, in: \bibinfo{editor}{Rogers, A.}, \bibinfo{editor}{Boyd-Graber, J.}, \bibinfo{editor}{Okazaki, N.} (Eds.), \bibinfo{booktitle}{Findings of the Association for Computational Linguistics: ACL 2023}, \bibinfo{publisher}{Association for Computational Linguistics}, \bibinfo{address}{Toronto, Canada}. pp. \bibinfo{pages}{5747--5758}.
\newblock \URLprefix \url{https://aclanthology.org/2023.findings-acl.355}, \DOIprefix\doi{10.18653/v1/2023.findings-acl.355}.
\bibitem[{Chen et~al.(2013)Chen, Perozzi, Al{-}Rfou' and Skiena}]{chen2013expressivnesswe}
\bibinfo{author}{Chen, Y.}, \bibinfo{author}{Perozzi, B.}, \bibinfo{author}{Al{-}Rfou', R.}, \bibinfo{author}{Skiena, S.}, \bibinfo{year}{2013}.
\newblock \bibinfo{title}{The expressive power of word embeddings}.
\newblock \bibinfo{journal}{CoRR} \bibinfo{volume}{abs/1301.3226}.
\newblock \URLprefix \url{http://arxiv.org/abs/1301.3226}, \href{http://arxiv.org/abs/1301.3226}{{\tt arXiv:1301.3226}}.
\bibitem[{Craswell et~al.(2020)Craswell, Mitra, Yilmaz, Campos and Voorhees}]{msmarcodl19}
\bibinfo{author}{Craswell, N.}, \bibinfo{author}{Mitra, B.}, \bibinfo{author}{Yilmaz, E.}, \bibinfo{author}{Campos, D.}, \bibinfo{author}{Voorhees, E.M.}, \bibinfo{year}{2020}.
\newblock \bibinfo{title}{Overview of the {TREC} 2019 deep learning track}.
\newblock \bibinfo{journal}{CoRR} \bibinfo{volume}{abs/2003.07820}.
\newblock \URLprefix \url{https://arxiv.org/abs/2003.07820}, \href{http://arxiv.org/abs/2003.07820}{{\tt arXiv:2003.07820}}.
\bibitem[{Daemen and Rijmen(2002)}]{rijndel2002AES}
\bibinfo{author}{Daemen, J.}, \bibinfo{author}{Rijmen, V.}, \bibinfo{year}{2002}.
\newblock \bibinfo{title}{The Design of Rijndael: {AES} - The Advanced Encryption Standard}.
\newblock Information Security and Cryptography, \bibinfo{publisher}{Springer}.
\newblock \URLprefix \url{https://doi.org/10.1007/978-3-662-04722-4}, \DOIprefix\doi{10.1007/978-3-662-04722-4}.
\bibitem[{Devlin et~al.(2019)Devlin, Chang, Lee and Toutanova}]{DevlinChangEtAl2019}
\bibinfo{author}{Devlin, J.}, \bibinfo{author}{Chang, M.}, \bibinfo{author}{Lee, K.}, \bibinfo{author}{Toutanova, K.}, \bibinfo{year}{2019}.
\newblock \bibinfo{title}{{BERT:} pre-training of deep bidirectional transformers for language understanding}, in: \bibinfo{editor}{Burstein, J.}, \bibinfo{editor}{Doran, C.}, \bibinfo{editor}{Solorio, T.} (Eds.), \bibinfo{booktitle}{Proceedings of the 2019 Conference of the North American Chapter of the Association for Computational Linguistics: Human Language Technologies, {NAACL-HLT} 2019, Minneapolis, MN, USA, June 2-7, 2019, Volume 1 (Long and Short Papers)}, \bibinfo{publisher}{Association for Computational Linguistics}. pp. \bibinfo{pages}{4171--4186}.
\newblock \URLprefix \url{https://doi.org/10.18653/v1/n19-1423}, \DOIprefix\doi{10.18653/V1/N19-1423}.
\bibitem[{Domingo{-}Ferrer(2008)}]{domingoferrer2008surveyprivacydatamining}
\bibinfo{author}{Domingo{-}Ferrer, J.}, \bibinfo{year}{2008}.
\newblock \bibinfo{title}{A survey of inference control methods for privacy-preserving data mining}, in: \bibinfo{editor}{Aggarwal, C.C.}, \bibinfo{editor}{Yu, P.S.} (Eds.), \bibinfo{booktitle}{Privacy-Preserving Data Mining - Models and Algorithms}. \bibinfo{publisher}{Springer}. volume~\bibinfo{volume}{34} of \textit{\bibinfo{series}{Advances in Database Systems}}, pp. \bibinfo{pages}{53--80}.
\newblock \URLprefix \url{https://doi.org/10.1007/978-0-387-70992-5\_3}, \DOIprefix\doi{10.1007/978-0-387-70992-5\_3}.
\bibitem[{Domingo-Ferrer et~al.(2009a)Domingo-Ferrer, Bras-Amorós, Wu and Manjón}]{domingoferrer2009PIRinpeer2peer}
\bibinfo{author}{Domingo-Ferrer, J.}, \bibinfo{author}{Bras-Amorós, M.}, \bibinfo{author}{Wu, Q.}, \bibinfo{author}{Manjón, J.}, \bibinfo{year}{2009}a.
\newblock \bibinfo{title}{User-private information retrieval based on a peer-to-peer community}.
\newblock \bibinfo{journal}{Data \& Knowledge Engineering} \bibinfo{volume}{68}, \bibinfo{pages}{1237--1252}.
\newblock \URLprefix \url{https://www.sciencedirect.com/science/article/pii/S0169023X09000937}, \DOIprefix\doi{https://doi.org/10.1016/j.datak.2009.06.004}. \bibinfo{note}{including Special Section: Conference on Privacy in Statistical Databases (PSD 2008) – Six selected and extended papers on Database Privacy}.
\bibitem[{Domingo{-}Ferrer et~al.(2021)Domingo{-}Ferrer, S{\'{a}}nchez and Blanco{-}Justicia}]{DomingoferrerSanchezBlancojusticia2021}
\bibinfo{author}{Domingo{-}Ferrer, J.}, \bibinfo{author}{S{\'{a}}nchez, D.}, \bibinfo{author}{Blanco{-}Justicia, A.}, \bibinfo{year}{2021}.
\newblock \bibinfo{title}{The limits of differential privacy (and its misuse in data release and machine learning)}.
\newblock \bibinfo{journal}{Commun. {ACM}} \bibinfo{volume}{64}, \bibinfo{pages}{33--35}.
\newblock \URLprefix \url{https://doi.org/10.1145/3433638}, \DOIprefix\doi{10.1145/3433638}.
\bibitem[{Domingo-Ferrer et~al.(2021)Domingo-Ferrer, S\'{a}nchez and Blanco-Justicia}]{domingoferrer2021limitsdp}
\bibinfo{author}{Domingo-Ferrer, J.}, \bibinfo{author}{S\'{a}nchez, D.}, \bibinfo{author}{Blanco-Justicia, A.}, \bibinfo{year}{2021}.
\newblock \bibinfo{title}{The limits of differential privacy (and its misuse in data release and machine learning)}.
\newblock \bibinfo{journal}{Commun. ACM} \bibinfo{volume}{64}, \bibinfo{pages}{33–35}.
\newblock \URLprefix \url{https://doi.org/10.1145/3433638}, \DOIprefix\doi{10.1145/3433638}.
\bibitem[{Domingo-Ferrer et~al.(2009b)Domingo-Ferrer, Solanas and Castell{\`a}-Roca}]{DomingoFerrer2009HkprivateIR}
\bibinfo{author}{Domingo-Ferrer, J.}, \bibinfo{author}{Solanas, A.}, \bibinfo{author}{Castell{\`a}-Roca, J.}, \bibinfo{year}{2009}b.
\newblock \bibinfo{title}{H(k)-private information retrieval from privacy-uncooperative queryable databases.">h(k)-private information retrieval from privacy-uncooperative queryable databases}.
\newblock \bibinfo{journal}{Online Inf. Rev.} \bibinfo{volume}{33}, \bibinfo{pages}{720--744}.
\newblock \URLprefix \url{https://api.semanticscholar.org/CorpusID:475785}.
\bibitem[{Dwork et~al.(2006)Dwork, McSherry, Nissim and Smith}]{dwork2006calibrating}
\bibinfo{author}{Dwork, C.}, \bibinfo{author}{McSherry, F.}, \bibinfo{author}{Nissim, K.}, \bibinfo{author}{Smith, A.}, \bibinfo{year}{2006}.
\newblock \bibinfo{title}{Calibrating noise to sensitivity in private data analysis}, in: \bibinfo{editor}{Halevi, S.}, \bibinfo{editor}{Rabin, T.} (Eds.), \bibinfo{booktitle}{Theory of Cryptography}, \bibinfo{publisher}{Springer Berlin Heidelberg}, \bibinfo{address}{Berlin, Heidelberg}. pp. \bibinfo{pages}{265--284}.
\bibitem[{Dwork and Roth(2014)}]{dwork2014privacybook}
\bibinfo{author}{Dwork, C.}, \bibinfo{author}{Roth, A.}, \bibinfo{year}{2014}.
\newblock \bibinfo{title}{The algorithmic foundations of differential privacy}.
\newblock \bibinfo{journal}{Found. Trends Theor. Comput. Sci.} \bibinfo{volume}{9}, \bibinfo{pages}{211--407}.
\newblock \URLprefix \url{https://doi.org/10.1561/0400000042}, \DOIprefix\doi{10.1561/0400000042}.
\bibitem[{Elovici et~al.(2002)Elovici, Shapira and Maschiach}]{elovici2002privacymodelforwebsurf}
\bibinfo{author}{Elovici, Y.}, \bibinfo{author}{Shapira, B.}, \bibinfo{author}{Maschiach, A.}, \bibinfo{year}{2002}.
\newblock \bibinfo{title}{A new privacy model for web surfing}, in: \bibinfo{editor}{Halevy, A.}, \bibinfo{editor}{Gal, A.} (Eds.), \bibinfo{booktitle}{Next Generation Information Technologies and Systems}, \bibinfo{publisher}{Springer Berlin Heidelberg}, \bibinfo{address}{Berlin, Heidelberg}. pp. \bibinfo{pages}{45--57}.
\bibitem[{Faggioli and Ferro(2024)}]{FaggioliFerro2024}
\bibinfo{author}{Faggioli, G.}, \bibinfo{author}{Ferro, N.}, \bibinfo{year}{2024}.
\newblock \bibinfo{title}{Query obfuscation for information retrieval through differential privacy}, in: \bibinfo{editor}{Goharian, N.}, \bibinfo{editor}{Tonellotto, N.}, \bibinfo{editor}{He, Y.}, \bibinfo{editor}{Lipani, A.}, \bibinfo{editor}{McDonald, G.}, \bibinfo{editor}{Macdonald, C.}, \bibinfo{editor}{Ounis, I.} (Eds.), \bibinfo{booktitle}{Advances in Information Retrieval}, \bibinfo{publisher}{Springer Nature Switzerland}, \bibinfo{address}{Cham}. pp. \bibinfo{pages}{278--294}.
\bibitem[{Feyisetan et~al.(2020)Feyisetan, Balle, Drake and Diethe}]{feyisetan2020multivariate}
\bibinfo{author}{Feyisetan, O.}, \bibinfo{author}{Balle, B.}, \bibinfo{author}{Drake, T.}, \bibinfo{author}{Diethe, T.}, \bibinfo{year}{2020}.
\newblock \bibinfo{title}{Privacy- and utility-preserving textual analysis via calibrated multivariate perturbations}, in: \bibinfo{booktitle}{Proceedings of the 13th International Conference on Web Search and Data Mining}, \bibinfo{publisher}{Association for Computing Machinery}, \bibinfo{address}{New York, NY, USA}. p. \bibinfo{pages}{178–186}.
\newblock \URLprefix \url{https://doi.org/10.1145/3336191.3371856}, \DOIprefix\doi{10.1145/3336191.3371856}.
\bibitem[{Feyisetan et~al.(2019)Feyisetan, Diethe and Drake}]{feyisetan2019leveraging}
\bibinfo{author}{Feyisetan, O.}, \bibinfo{author}{Diethe, T.}, \bibinfo{author}{Drake, T.}, \bibinfo{year}{2019}.
\newblock \bibinfo{title}{Leveraging hierarchical representations for preserving privacy and utility in text}, in: \bibinfo{booktitle}{2019 IEEE International Conference on Data Mining (ICDM)}, \bibinfo{publisher}{IEEE Computer Society}, \bibinfo{address}{Los Alamitos, CA, USA}. pp. \bibinfo{pages}{210--219}.
\newblock \URLprefix \url{https://doi.ieeecomputersociety.org/10.1109/ICDM.2019.00031}, \DOIprefix\doi{10.1109/ICDM.2019.00031}.
\bibitem[{Feyisetan and Kasiviswanathan(2021)}]{feyisetan2021private}
\bibinfo{author}{Feyisetan, O.}, \bibinfo{author}{Kasiviswanathan, S.}, \bibinfo{year}{2021}.
\newblock \bibinfo{title}{Private release of text embedding vectors}, in: \bibinfo{editor}{Pruksachatkun, Y.}, \bibinfo{editor}{Ramakrishna, A.}, \bibinfo{editor}{Chang, K.W.}, \bibinfo{editor}{Krishna, S.}, \bibinfo{editor}{Dhamala, J.}, \bibinfo{editor}{Guha, T.}, \bibinfo{editor}{Ren, X.} (Eds.), \bibinfo{booktitle}{Proceedings of the First Workshop on Trustworthy Natural Language Processing}, \bibinfo{publisher}{Association for Computational Linguistics}, \bibinfo{address}{Online}. pp. \bibinfo{pages}{15--27}.
\newblock \URLprefix \url{https://aclanthology.org/2021.trustnlp-1.3}, \DOIprefix\doi{10.18653/v1/2021.trustnlp-1.3}.
\bibitem[{Fr\"{o}be et~al.(2022)Fr\"{o}be, Schmidt and Hagen}]{frobe2022efficentqueryobf}
\bibinfo{author}{Fr\"{o}be, M.}, \bibinfo{author}{Schmidt, E.O.}, \bibinfo{author}{Hagen, M.}, \bibinfo{year}{2022}.
\newblock \bibinfo{title}{Efficient query obfuscation with keyqueries}, in: \bibinfo{booktitle}{IEEE/WIC/ACM International Conference on Web Intelligence and Intelligent Agent Technology}, \bibinfo{publisher}{Association for Computing Machinery}, \bibinfo{address}{New York, NY, USA}. p. \bibinfo{pages}{154–161}.
\newblock \URLprefix \url{https://doi.org/10.1145/3486622.3493950}, \DOIprefix\doi{10.1145/3486622.3493950}.
\bibitem[{Hampton et~al.(2010)Hampton, Fitch, Allshouse, Doherty, Gesink, Leone, Serre and Miller}]{hampton2010mapping}
\bibinfo{author}{Hampton, K.H.}, \bibinfo{author}{Fitch, M.K.}, \bibinfo{author}{Allshouse, W.B.}, \bibinfo{author}{Doherty, I.A.}, \bibinfo{author}{Gesink, D.C.}, \bibinfo{author}{Leone, P.A.}, \bibinfo{author}{Serre, M.L.}, \bibinfo{author}{Miller, W.C.}, \bibinfo{year}{2010}.
\newblock \bibinfo{title}{Mapping health data: improved privacy protection with donut method geomasking}.
\newblock \bibinfo{journal}{American journal of epidemiology} \bibinfo{volume}{172}, \bibinfo{pages}{1062--1069}.
\bibitem[{Hofst{\"{a}}tter et~al.(2021)Hofst{\"{a}}tter, Lin, Yang, Lin and Hanbury}]{hofstatter21tasb}
\bibinfo{author}{Hofst{\"{a}}tter, S.}, \bibinfo{author}{Lin, S.}, \bibinfo{author}{Yang, J.}, \bibinfo{author}{Lin, J.}, \bibinfo{author}{Hanbury, A.}, \bibinfo{year}{2021}.
\newblock \bibinfo{title}{Efficiently teaching an effective dense retriever with balanced topic aware sampling}, in: \bibinfo{editor}{Diaz, F.}, \bibinfo{editor}{Shah, C.}, \bibinfo{editor}{Suel, T.}, \bibinfo{editor}{Castells, P.}, \bibinfo{editor}{Jones, R.}, \bibinfo{editor}{Sakai, T.} (Eds.), \bibinfo{booktitle}{{SIGIR} '21: The 44th International {ACM} {SIGIR} Conference on Research and Development in Information Retrieval, Virtual Event, Canada, July 11-15, 2021}, \bibinfo{publisher}{{ACM}}. pp. \bibinfo{pages}{113--122}.
\newblock \URLprefix \url{https://doi.org/10.1145/3404835.3462891}, \DOIprefix\doi{10.1145/3404835.3462891}.
\bibitem[{Izacard et~al.(2022)Izacard, Caron, Hosseini, Riedel, Bojanowski, Joulin and Grave}]{izacard22contriever}
\bibinfo{author}{Izacard, G.}, \bibinfo{author}{Caron, M.}, \bibinfo{author}{Hosseini, L.}, \bibinfo{author}{Riedel, S.}, \bibinfo{author}{Bojanowski, P.}, \bibinfo{author}{Joulin, A.}, \bibinfo{author}{Grave, E.}, \bibinfo{year}{2022}.
\newblock \bibinfo{title}{Unsupervised dense information retrieval with contrastive learning}.
\newblock \bibinfo{journal}{Trans. Mach. Learn. Res.} \bibinfo{volume}{2022}.
\newblock \URLprefix \url{https://openreview.net/forum?id=jKN1pXi7b0}.
\bibitem[{Kumok et~al.(2020)Kumok, Fabrikant, Dai, Kamath, Stanton, Desfontaines, Kraft, Gabrilovich, Flores, Wellenius, Eckstein, Perera, Shafran, Davis, Raman, Everett, Gadepalli, Zoghi, Sun, Huang, Bavadekar, Roessler, Ramachandran and Mayer}]{kumok2020covidsearch}
\bibinfo{author}{Kumok, A.}, \bibinfo{author}{Fabrikant, A.}, \bibinfo{author}{Dai, A.M.}, \bibinfo{author}{Kamath, C.}, \bibinfo{author}{Stanton, C.}, \bibinfo{author}{Desfontaines, D.}, \bibinfo{author}{Kraft, D.}, \bibinfo{author}{Gabrilovich, E.}, \bibinfo{author}{Flores, G.}, \bibinfo{author}{Wellenius, G.A.}, \bibinfo{author}{Eckstein, I.}, \bibinfo{author}{Perera, I.M.}, \bibinfo{author}{Shafran, I.}, \bibinfo{author}{Davis, J.S.}, \bibinfo{author}{Raman, K.}, \bibinfo{author}{Everett, K.}, \bibinfo{author}{Gadepalli, K.K.}, \bibinfo{author}{Zoghi, M.}, \bibinfo{author}{Sun, M.}, \bibinfo{author}{Huang, R.}, \bibinfo{author}{Bavadekar, S.}, \bibinfo{author}{Roessler, T.L.}, \bibinfo{author}{Ramachandran, V.}, \bibinfo{author}{Mayer, Y.}, \bibinfo{year}{2020}.
\newblock \bibinfo{title}{Google COVID-19 Search Trends Symptoms Dataset: Anonymization Process Description}.
\newblock \bibinfo{type}{Technical Report}. N/A.
\newblock \URLprefix \url{https://arxiv.org/abs/2009.01265}.
\bibitem[{Lavie and Agarwal(2007)}]{LavieAgarwal2007}
\bibinfo{author}{Lavie, A.}, \bibinfo{author}{Agarwal, A.}, \bibinfo{year}{2007}.
\newblock \bibinfo{title}{{METEOR:} an automatic metric for {MT} evaluation with high levels of correlation with human judgments}, in: \bibinfo{editor}{Callison{-}Burch, C.}, \bibinfo{editor}{Koehn, P.}, \bibinfo{editor}{Fordyce, C.S.}, \bibinfo{editor}{Monz, C.} (Eds.), \bibinfo{booktitle}{Proceedings of the Second Workshop on Statistical Machine Translation, WMT@ACL 2007, Prague, Czech Republic, June 23, 2007}, \bibinfo{publisher}{Association for Computational Linguistics}. pp. \bibinfo{pages}{228--231}.
\newblock \URLprefix \url{https://aclanthology.org/W07-0734/}.
\bibitem[{Le et~al.(2019)Le, Maragh, Ekdale, High, Havens and Shafiq}]{le2019googlepolitical}
\bibinfo{author}{Le, H.}, \bibinfo{author}{Maragh, R.}, \bibinfo{author}{Ekdale, B.}, \bibinfo{author}{High, A.}, \bibinfo{author}{Havens, T.}, \bibinfo{author}{Shafiq, Z.}, \bibinfo{year}{2019}.
\newblock \bibinfo{title}{Measuring political personalization of google news search}, in: \bibinfo{booktitle}{The World Wide Web Conference}, \bibinfo{publisher}{Association for Computing Machinery}, \bibinfo{address}{New York, NY, USA}. p. \bibinfo{pages}{2957–2963}.
\newblock \URLprefix \url{https://doi.org/10.1145/3308558.3313682}, \DOIprefix\doi{10.1145/3308558.3313682}.
\bibitem[{Malandrino et~al.(2013)Malandrino, Petta, Scarano, Serra, Spinelli and Krishnamurthy}]{malandrino2013privacy}
\bibinfo{author}{Malandrino, D.}, \bibinfo{author}{Petta, A.}, \bibinfo{author}{Scarano, V.}, \bibinfo{author}{Serra, L.}, \bibinfo{author}{Spinelli, R.}, \bibinfo{author}{Krishnamurthy, B.}, \bibinfo{year}{2013}.
\newblock \bibinfo{title}{Privacy awareness about information leakage: Who knows what about me?}, in: \bibinfo{booktitle}{Proceedings of the 12th ACM Workshop on Workshop on Privacy in the Electronic Society}, \bibinfo{publisher}{Association for Computing Machinery}, \bibinfo{address}{New York, NY, USA}. p. \bibinfo{pages}{279–284}.
\newblock \URLprefix \url{https://doi.org/10.1145/2517840.2517868}, \DOIprefix\doi{10.1145/2517840.2517868}.
\bibitem[{McSherry and Talwar(2007)}]{mcsherry2007exponential}
\bibinfo{author}{McSherry, F.}, \bibinfo{author}{Talwar, K.}, \bibinfo{year}{2007}.
\newblock \bibinfo{title}{Mechanism design via differential privacy}, in: \bibinfo{booktitle}{48th Annual {IEEE} Symposium on Foundations of Computer Science {(FOCS} 2007), October 20-23, 2007, Providence, RI, USA, Proceedings}, \bibinfo{publisher}{{IEEE} Computer Society}. pp. \bibinfo{pages}{94--103}.
\newblock \URLprefix \url{https://doi.org/10.1109/FOCS.2007.41}, \DOIprefix\doi{10.1109/FOCS.2007.41}.
\bibitem[{Mikolov et~al.(2013)Mikolov, Sutskever, Chen, Corrado and Dean}]{mikolov2013word2vec}
\bibinfo{author}{Mikolov, T.}, \bibinfo{author}{Sutskever, I.}, \bibinfo{author}{Chen, K.}, \bibinfo{author}{Corrado, G.S.}, \bibinfo{author}{Dean, J.}, \bibinfo{year}{2013}.
\newblock \bibinfo{title}{Distributed representations of words and phrases and their compositionality}, in: \bibinfo{editor}{Burges, C.J.C.}, \bibinfo{editor}{Bottou, L.}, \bibinfo{editor}{Ghahramani, Z.}, \bibinfo{editor}{Weinberger, K.Q.} (Eds.), \bibinfo{booktitle}{Advances in Neural Information Processing Systems 26: 27th Annual Conference on Neural Information Processing Systems 2013. Proceedings of a meeting held December 5-8, 2013, Lake Tahoe, Nevada, United States}, pp. \bibinfo{pages}{3111--3119}.
\newblock \URLprefix \url{https://proceedings.neurips.cc/paper/2013/hash/9aa42b31882ec039965f3c4923ce901b-Abstract.html}.
\bibitem[{Miller(1995)}]{miller1995wordnet}
\bibinfo{author}{Miller, G.A.}, \bibinfo{year}{1995}.
\newblock \bibinfo{title}{Wordnet: A lexical database for english}.
\newblock \bibinfo{journal}{Commun. ACM} \bibinfo{volume}{38}, \bibinfo{pages}{39–41}.
\newblock \URLprefix \url{https://doi.org/10.1145/219717.219748}, \DOIprefix\doi{10.1145/219717.219748}.
\bibitem[{Mustafaraj et~al.(2020)Mustafaraj, Lurie and Devine}]{mustafaraj2020politicalbias}
\bibinfo{author}{Mustafaraj, E.}, \bibinfo{author}{Lurie, E.}, \bibinfo{author}{Devine, C.}, \bibinfo{year}{2020}.
\newblock \bibinfo{title}{The case for voter-centered audits of search engines during political elections}, in: \bibinfo{booktitle}{Proceedings of the 2020 Conference on Fairness, Accountability, and Transparency}, \bibinfo{publisher}{Association for Computing Machinery}, \bibinfo{address}{New York, NY, USA}. p. \bibinfo{pages}{559–569}.
\newblock \URLprefix \url{https://doi.org/10.1145/3351095.3372835}, \DOIprefix\doi{10.1145/3351095.3372835}.
\bibitem[{Nguyen et~al.(2016)Nguyen, Rosenberg, Song, Gao, Tiwary, Majumder and Deng}]{msmarco}
\bibinfo{author}{Nguyen, T.}, \bibinfo{author}{Rosenberg, M.}, \bibinfo{author}{Song, X.}, \bibinfo{author}{Gao, J.}, \bibinfo{author}{Tiwary, S.}, \bibinfo{author}{Majumder, R.}, \bibinfo{author}{Deng, L.}, \bibinfo{year}{2016}.
\newblock \bibinfo{title}{{MS} {MARCO:} {A} human generated machine reading comprehension dataset}, in: \bibinfo{editor}{Besold, T.R.}, \bibinfo{editor}{Bordes, A.}, \bibinfo{editor}{d'Avila Garcez, A.S.}, \bibinfo{editor}{Wayne, G.} (Eds.), \bibinfo{booktitle}{Proceedings of the Workshop on Cognitive Computation: Integrating neural and symbolic approaches 2016 co-located with the 30th Annual Conference on Neural Information Processing Systems {(NIPS} 2016), Barcelona, Spain, December 9, 2016}, \bibinfo{publisher}{CEUR-WS.org}.
\newblock \URLprefix \url{https://ceur-ws.org/Vol-1773/CoCoNIPS\_2016\_paper9.pdf}.
\bibitem[{Ono et~al.(2015)Ono, Miwa and Sasaki}]{ono2015antonymdetection}
\bibinfo{author}{Ono, M.}, \bibinfo{author}{Miwa, M.}, \bibinfo{author}{Sasaki, Y.}, \bibinfo{year}{2015}.
\newblock \bibinfo{title}{Word embedding-based antonym detection using thesauri and distributional information}, in: \bibinfo{editor}{Mihalcea, R.}, \bibinfo{editor}{Chai, J.}, \bibinfo{editor}{Sarkar, A.} (Eds.), \bibinfo{booktitle}{Proceedings of the 2015 Conference of the North {A}merican Chapter of the Association for Computational Linguistics: Human Language Technologies}, \bibinfo{publisher}{Association for Computational Linguistics}, \bibinfo{address}{Denver, Colorado}. pp. \bibinfo{pages}{984--989}.
\newblock \URLprefix \url{https://aclanthology.org/N15-1100}, \DOIprefix\doi{10.3115/v1/N15-1100}.
\bibitem[{Papineni et~al.(2002)Papineni, Roukos, Ward and Zhu}]{PapineniRoukosEtAl2002}
\bibinfo{author}{Papineni, K.}, \bibinfo{author}{Roukos, S.}, \bibinfo{author}{Ward, T.}, \bibinfo{author}{Zhu, W.}, \bibinfo{year}{2002}.
\newblock \bibinfo{title}{Bleu: a method for automatic evaluation of machine translation}, in: \bibinfo{booktitle}{Proceedings of the 40th Annual Meeting of the Association for Computational Linguistics, July 6-12, 2002, Philadelphia, PA, {USA}}, \bibinfo{publisher}{{ACL}}. pp. \bibinfo{pages}{311--318}.
\newblock \URLprefix \url{https://aclanthology.org/P02-1040/}, \DOIprefix\doi{10.3115/1073083.1073135}.
\bibitem[{Peddinti and Saxena(2011)}]{peddinti2011anonymizingnetworksforwebse}
\bibinfo{author}{Peddinti, S.T.}, \bibinfo{author}{Saxena, N.}, \bibinfo{year}{2011}.
\newblock \bibinfo{title}{On the effectiveness of anonymizing networks for web search privacy}, in: \bibinfo{booktitle}{Proceedings of the 6th ACM Symposium on Information, Computer and Communications Security}, \bibinfo{publisher}{Association for Computing Machinery}, \bibinfo{address}{New York, NY, USA}. p. \bibinfo{pages}{483–489}.
\newblock \URLprefix \url{https://doi.org/10.1145/1966913.1966984}, \DOIprefix\doi{10.1145/1966913.1966984}.
\bibitem[{Peddinti and Saxena(2014)}]{peddinti2014websearchqueryprivacy}
\bibinfo{author}{Peddinti, S.T.}, \bibinfo{author}{Saxena, N.}, \bibinfo{year}{2014}.
\newblock \bibinfo{title}{Web search query privacy: Evaluating query obfuscation and anonymizing networks}.
\newblock \bibinfo{journal}{J. Comput. Secur.} \bibinfo{volume}{22}, \bibinfo{pages}{155--199}.
\newblock \URLprefix \url{https://doi.org/10.3233/JCS-130491}, \DOIprefix\doi{10.3233/JCS-130491}.
\bibitem[{Peng et~al.(2022)Peng, Li and Zhao}]{peng2022semanticspreserved}
\bibinfo{author}{Peng, L.}, \bibinfo{author}{Li, Z.}, \bibinfo{author}{Zhao, H.}, \bibinfo{year}{2022}.
\newblock \bibinfo{title}{Semantics-preserved distortion for personal privacy protection}.
\newblock \bibinfo{journal}{CoRR} \bibinfo{volume}{abs/2201.00965}.
\newblock \URLprefix \url{https://arxiv.org/abs/2201.00965}, \href{http://arxiv.org/abs/2201.00965}{{\tt arXiv:2201.00965}}.
\bibitem[{Pennington et~al.(2014)Pennington, Socher and Manning}]{pennington2014glove}
\bibinfo{author}{Pennington, J.}, \bibinfo{author}{Socher, R.}, \bibinfo{author}{Manning, C.}, \bibinfo{year}{2014}.
\newblock \bibinfo{title}{{G}lo{V}e: Global vectors for word representation}, in: \bibinfo{editor}{Moschitti, A.}, \bibinfo{editor}{Pang, B.}, \bibinfo{editor}{Daelemans, W.} (Eds.), \bibinfo{booktitle}{Proceedings of the 2014 Conference on Empirical Methods in Natural Language Processing ({EMNLP})}, \bibinfo{publisher}{Association for Computational Linguistics}, \bibinfo{address}{Doha, Qatar}. pp. \bibinfo{pages}{1532--1543}.
\newblock \URLprefix \url{https://aclanthology.org/D14-1162}, \DOIprefix\doi{10.3115/v1/D14-1162}.
\bibitem[{Reimers and Gurevych(2020)}]{reimers2020multilingualsentencebert}
\bibinfo{author}{Reimers, N.}, \bibinfo{author}{Gurevych, I.}, \bibinfo{year}{2020}.
\newblock \bibinfo{title}{Making monolingual sentence embeddings multilingual using knowledge distillation}, in: \bibinfo{booktitle}{Proceedings of the 2020 Conference on Empirical Methods in Natural Language Processing}, \bibinfo{publisher}{Association for Computational Linguistics}.
\newblock \URLprefix \url{https://arxiv.org/abs/2004.09813}.
\bibitem[{Rivest et~al.(1978)Rivest, Shamir and Adleman}]{rivest1978RSA}
\bibinfo{author}{Rivest, R.L.}, \bibinfo{author}{Shamir, A.}, \bibinfo{author}{Adleman, L.M.}, \bibinfo{year}{1978}.
\newblock \bibinfo{title}{A method for obtaining digital signatures and public-key cryptosystems}.
\newblock \bibinfo{journal}{Commun. {ACM}} \bibinfo{volume}{21}, \bibinfo{pages}{120--126}.
\newblock \URLprefix \url{https://doi.org/10.1145/359340.359342}, \DOIprefix\doi{10.1145/359340.359342}.
\bibitem[{Robertson et~al.(1994)Robertson, Walker, Jones, Hancock{-}Beaulieu and Gatford}]{bm25}
\bibinfo{author}{Robertson, S.E.}, \bibinfo{author}{Walker, S.}, \bibinfo{author}{Jones, S.}, \bibinfo{author}{Hancock{-}Beaulieu, M.}, \bibinfo{author}{Gatford, M.}, \bibinfo{year}{1994}.
\newblock \bibinfo{title}{Okapi at {TREC-3}}, in: \bibinfo{editor}{Harman, D.K.} (Ed.), \bibinfo{booktitle}{Proceedings of The Third Text REtrieval Conference, {TREC} 1994, Gaithersburg, Maryland, USA, November 2-4, 1994}, \bibinfo{publisher}{National Institute of Standards and Technology {(NIST)}}. pp. \bibinfo{pages}{109--126}.
\newblock \URLprefix \url{http://trec.nist.gov/pubs/trec3/papers/city.ps.gz}.
\bibitem[{Salton et~al.(1975)Salton, Wong and Yang}]{tfidf}
\bibinfo{author}{Salton, G.}, \bibinfo{author}{Wong, A.}, \bibinfo{author}{Yang, C.}, \bibinfo{year}{1975}.
\newblock \bibinfo{title}{A vector space model for automatic indexing}.
\newblock \bibinfo{journal}{Commun. {ACM}} \bibinfo{volume}{18}, \bibinfo{pages}{613--620}.
\newblock \URLprefix \url{https://doi.org/10.1145/361219.361220}, \DOIprefix\doi{10.1145/361219.361220}.
\bibitem[{Shekhawat et~al.(2019)Shekhawat, Chauhan and Muthiah}]{shekhawat2019genderbiasinads}
\bibinfo{author}{Shekhawat, N.}, \bibinfo{author}{Chauhan, A.}, \bibinfo{author}{Muthiah, S.B.}, \bibinfo{year}{2019}.
\newblock \bibinfo{title}{Algorithmic privacy and gender bias issues in google ad settings}, in: \bibinfo{booktitle}{Proceedings of the 10th ACM Conference on Web Science}, \bibinfo{publisher}{Association for Computing Machinery}, \bibinfo{address}{New York, NY, USA}. p. \bibinfo{pages}{281–285}.
\newblock \URLprefix \url{https://doi.org/10.1145/3292522.3326033}, \DOIprefix\doi{10.1145/3292522.3326033}.
\bibitem[{Sweeney(2002)}]{Sweeney2002kanonimity}
\bibinfo{author}{Sweeney, L.}, \bibinfo{year}{2002}.
\newblock \bibinfo{title}{k-anonymity: {A} model for protecting privacy}.
\newblock \bibinfo{journal}{Int. J. Uncertain. Fuzziness Knowl. Based Syst.} \bibinfo{volume}{10}, \bibinfo{pages}{557--570}.
\newblock \URLprefix \url{https://doi.org/10.1142/S0218488502001648}, \DOIprefix\doi{10.1142/S0218488502001648}.
\bibitem[{Vaswani et~al.(2017)Vaswani, Shazeer, Parmar, Uszkoreit, Jones, Gomez, Kaiser and Polosukhin}]{VaswaniShazeerEtAl2017}
\bibinfo{author}{Vaswani, A.}, \bibinfo{author}{Shazeer, N.}, \bibinfo{author}{Parmar, N.}, \bibinfo{author}{Uszkoreit, J.}, \bibinfo{author}{Jones, L.}, \bibinfo{author}{Gomez, A.N.}, \bibinfo{author}{Kaiser, L.}, \bibinfo{author}{Polosukhin, I.}, \bibinfo{year}{2017}.
\newblock \bibinfo{title}{Attention is all you need}, in: \bibinfo{editor}{Guyon, I.}, \bibinfo{editor}{von Luxburg, U.}, \bibinfo{editor}{Bengio, S.}, \bibinfo{editor}{Wallach, H.M.}, \bibinfo{editor}{Fergus, R.}, \bibinfo{editor}{Vishwanathan, S.V.N.}, \bibinfo{editor}{Garnett, R.} (Eds.), \bibinfo{booktitle}{Advances in Neural Information Processing Systems 30: Annual Conference on Neural Information Processing Systems 2017, December 4-9, 2017, Long Beach, CA, {USA}}, pp. \bibinfo{pages}{5998--6008}.
\newblock \URLprefix \url{https://proceedings.neurips.cc/paper/2017/hash/3f5ee243547dee91fbd053c1c4a845aa-Abstract.html}.
\bibitem[{Voorhees(2004)}]{robust04collection}
\bibinfo{author}{Voorhees, E.M.}, \bibinfo{year}{2004}.
\newblock \bibinfo{title}{Overview of the {TREC} 2004 robust track}, in: \bibinfo{editor}{Voorhees, E.M.}, \bibinfo{editor}{Buckland, L.P.} (Eds.), \bibinfo{booktitle}{Proceedings of the Thirteenth Text REtrieval Conference, {TREC} 2004, Gaithersburg, Maryland, USA, November 16-19, 2004}, \bibinfo{publisher}{National Institute of Standards and Technology {(NIST)}}.
\newblock \URLprefix \url{http://trec.nist.gov/pubs/trec13/papers/ROBUST.OVERVIEW.pdf}.
\bibitem[{Wagner and Eckhoff(2018)}]{wagner2018privacymetricssurvey}
\bibinfo{author}{Wagner, I.}, \bibinfo{author}{Eckhoff, D.}, \bibinfo{year}{2018}.
\newblock \bibinfo{title}{Technical privacy metrics: A systematic survey}.
\newblock \bibinfo{journal}{ACM Comput. Surv.} \bibinfo{volume}{51}.
\newblock \URLprefix \url{https://doi.org/10.1145/3168389}, \DOIprefix\doi{10.1145/3168389}.
\bibitem[{Weggenmann and Kerschbaum(2018)}]{weggermann2018syntf}
\bibinfo{author}{Weggenmann, B.}, \bibinfo{author}{Kerschbaum, F.}, \bibinfo{year}{2018}.
\newblock \bibinfo{title}{Syntf: Synthetic and differentially private term frequency vectors for privacy-preserving text mining}, in: \bibinfo{booktitle}{The 41st International ACM SIGIR Conference on Research \& Development in Information Retrieval}, \bibinfo{publisher}{Association for Computing Machinery}, \bibinfo{address}{New York, NY, USA}. p. \bibinfo{pages}{305–314}.
\newblock \URLprefix \url{https://doi.org/10.1145/3209978.3210008}, \DOIprefix\doi{10.1145/3209978.3210008}.
\bibitem[{Xu et~al.(2020)Xu, Aggarwal, Feyisetan and Teissier}]{xu2020mahalanobis}
\bibinfo{author}{Xu, Z.}, \bibinfo{author}{Aggarwal, A.}, \bibinfo{author}{Feyisetan, O.}, \bibinfo{author}{Teissier, N.}, \bibinfo{year}{2020}.
\newblock \bibinfo{title}{A differentially private text perturbation method using regularized mahalanobis metric}, in: \bibinfo{editor}{Feyisetan, O.}, \bibinfo{editor}{Ghanavati, S.}, \bibinfo{editor}{Malmasi, S.}, \bibinfo{editor}{Thaine, P.} (Eds.), \bibinfo{booktitle}{Proceedings of the Second Workshop on Privacy in NLP}, \bibinfo{publisher}{Association for Computational Linguistics}, \bibinfo{address}{Online}. pp. \bibinfo{pages}{7--17}.
\newblock \URLprefix \url{https://aclanthology.org/2020.privatenlp-1.2}, \DOIprefix\doi{10.18653/v1/2020.privatenlp-1.2}.
\bibitem[{Xu et~al.(2021)Xu, Aggarwal, Feyisetan and Teissier}]{XuAggarwalEtAl2021}
\bibinfo{author}{Xu, Z.}, \bibinfo{author}{Aggarwal, A.}, \bibinfo{author}{Feyisetan, O.}, \bibinfo{author}{Teissier, N.}, \bibinfo{year}{2021}.
\newblock \bibinfo{title}{On a utilitarian approach to privacy preserving text generation}.
\newblock \bibinfo{journal}{CoRR} \bibinfo{volume}{abs/2104.11838}.
\newblock \DOIprefix\doi{10.48550/ARXIV.2104.11838}, \href{http://arxiv.org/abs/2104.11838}{{\tt arXiv:2104.11838}}.
\bibitem[{Yue et~al.(2021)Yue, Du, Wang, Li, Sun and Chow}]{yue2021santext}
\bibinfo{author}{Yue, X.}, \bibinfo{author}{Du, M.}, \bibinfo{author}{Wang, T.}, \bibinfo{author}{Li, Y.}, \bibinfo{author}{Sun, H.}, \bibinfo{author}{Chow, S.S.M.}, \bibinfo{year}{2021}.
\newblock \bibinfo{title}{Differential privacy for text analytics via natural text sanitization}, in: \bibinfo{editor}{Zong, C.}, \bibinfo{editor}{Xia, F.}, \bibinfo{editor}{Li, W.}, \bibinfo{editor}{Navigli, R.} (Eds.), \bibinfo{booktitle}{Findings of the Association for Computational Linguistics: {ACL/IJCNLP} 2021, Online Event, August 1-6, 2021}, \bibinfo{publisher}{Association for Computational Linguistics}. pp. \bibinfo{pages}{3853--3866}.
\newblock \URLprefix \url{https://doi.org/10.18653/v1/2021.findings-acl.337}, \DOIprefix\doi{10.18653/V1/2021.FINDINGS-ACL.337}.
\bibitem[{Zhao and Chen(2022)}]{zhao2022survey}
\bibinfo{author}{Zhao, Y.}, \bibinfo{author}{Chen, J.}, \bibinfo{year}{2022}.
\newblock \bibinfo{title}{A survey on differential privacy for unstructured data content}.
\newblock \bibinfo{journal}{ACM Comput. Surv.} \bibinfo{volume}{54}.
\newblock \URLprefix \url{https://doi.org/10.1145/3490237}, \DOIprefix\doi{10.1145/3490237}.
\bibitem[{Zimmerman et~al.(2020)Zimmerman, Thorpe, Chamberlain and Kruschwitz}]{zimmerman2030TowardsSearchStrategiesforBetterPrivacyandInformation}
\bibinfo{author}{Zimmerman, S.}, \bibinfo{author}{Thorpe, A.}, \bibinfo{author}{Chamberlain, J.}, \bibinfo{author}{Kruschwitz, U.}, \bibinfo{year}{2020}.
\newblock \bibinfo{title}{Towards search strategies for better privacy and information}, in: \bibinfo{booktitle}{Proceedings of the 2020 Conference on Human Information Interaction and Retrieval}, \bibinfo{publisher}{Association for Computing Machinery}, \bibinfo{address}{New York, NY, USA}. p. \bibinfo{pages}{124–134}.
\newblock \URLprefix \url{https://doi.org/10.1145/3343413.3377958}, \DOIprefix\doi{10.1145/3343413.3377958}.

\end{thebibliography}

\end{document}